\PassOptionsToPackage{noamssymbols}{newtxmath}
\documentclass[sigconf,nonacm]{acmart}

\usepackage{amsmath,amssymb,mathtools}
\usepackage{booktabs}
\usepackage{tikz}
\usetikzlibrary{arrows.meta,positioning,calc,fit,backgrounds,shapes.geometric}
\usepackage{pgfplots}
\pgfplotsset{compat=1.16}
\usepackage{float}
\newfloat{algorithm}{tbp}{loa}
\floatname{algorithm}{Algorithm}
\usepackage{algpseudocode}

\usepackage[T1]{fontenc}
\usepackage[utf8]{inputenc}

\setcopyright{none}
\renewcommand\footnotetextcopyrightpermission[1]{}
\newcommand{\R}{\mathbb{R}}
\newcommand{\N}{\mathbb{N}}
\newcommand{\E}{\mathbb{E}}
\let\Pr\relax\newcommand{\Pr}{\mathbb{P}}
\newcommand{\A}{\mathcal{A}}
\newcommand{\X}{\mathcal{X}}

\newcommand{\D}{\mathcal{D}}

\newcommand{\Alg}{\mathfrak{A}}
\newcommand{\rel}{\rho}
\newcommand{\cost}{c}
\newcommand{\eqdef}{\triangleq}
\newcommand{\odds}{o}
\newcommand{\LR}{\Lambda}
\newcommand{\seq}{\mathbin{\fatsemi}}
\newcommand{\fatsemi}{\mathbin{;}}
\DeclareMathOperator*{\maj}{maj}
\DeclareMathOperator*{\argmax}{arg\,max}
\newcommand{\DKL}{D_{\mathrm{KL}}}
\begin{document}

\title{Odds Law: The Decomposition Algebra}
\subtitle{On How Intelligence Organizes Itself to Solve Difficult Problems
Reliably}

\author{Hidayet Aksu}
\affiliation{%
  \institution{}
  \city{}
  \country{}
}
\email{hidayetaksu@gmail.com}

\renewcommand{\shortauthors}{}

\begin{abstract}
We ask a structural question: \emph{given unreliable elementary
problem-solvers, what organizations of them solve hard problems
reliably, and what are the limits?} We develop a \emph{decomposition
algebra}: elementary solvers are morphisms in a Markov (stochastic)
category, and four combinators (sequential composition, parallel
ensembling, verification gating, and recursive reduction) generate the
space of compound solvers. We equip this algebra with two homomorphisms,
a \emph{reliability} valuation into the ordered monoid $([0,1],\le)$ and
a \emph{cost} valuation into a commutative semiring, and we derive the
composition laws that govern how reliability flows through structure. Our
central results are (i) a \emph{verification odds law} (the result that
names this report), showing that a
verification gate multiplies the odds of correctness by the verifier's
likelihood ratio $\LR$, so that $k$ conditionally independent gates yield
geometric amplification; (ii) a \emph{reliability amplification theorem},
giving target reliability $1-\delta$ at $O(\log 1/\delta)$ verification
depth whenever $\LR>1$; and (iii) a \emph{threshold dichotomy}: above the
critical parameters ($\LR^\star=1$ for verification, $p^\star=\tfrac12$
for majority voting) reliability can be driven arbitrarily close to one
at logarithmic cost, while at or below them no amplification is possible.
We then show that \emph{self-organization} (the spontaneous appearance
of layered, verifier-saturated structure) is the least fixed point of a
monotone improvement operator on the complete lattice of strategies, and
that this fixed point equalizes marginal log-odds gain per unit cost
(a water-filling characterization). Finally, we prove matching limits:
an information ceiling bounds per-gate amplification by a divergence
quantity; shared error causes create a strictly positive voting
floor (we characterize exactly when), so diversity is \emph{necessary}
for unbounded amplification; and a no-free-lunch corollary shows that,
averaged over all problem families, no decomposition beats its base
solver. Reliability, in short, is neither free nor magical: it is bought
with independent information, arranged by composition, and bounded by the
verifier. These results form \textsc{Odds Law}, the theory layer of a
two-part program (Theory: \textsc{Odds Law} $\rightarrow$ Framework:
\textsc{Maestro Order}); the companion report builds the
\textsc{Maestro Order} harness on these laws.
\end{abstract}

\keywords{problem decomposition, reliability, verification, ensembles,
Markov categories, fixed points, threshold theorems, bounded rationality}

\maketitle

\section{Introduction}
A single attempt at a hard problem is rarely trustworthy. Yet
collections of unreliable attempts, suitably \emph{organized}, can be
made arbitrarily trustworthy; this is the everyday experience of
science, engineering, bureaucracy, and biological computation. Von
Neumann's classical question, how to build a reliable organism (or
automaton) from unreliable components~\cite{vonneumann1956}, is the same
question one asks of a research group, a compiler with its test suite, or
an ensemble of fallible reasoners. The components differ; the
\emph{organizing principle} may not.

This paper takes that organizing principle as its object of study. We fix
a population of elementary problem-solvers (each a randomized map from
problem instances to candidate answers, and each only partly reliable) and
ask which \emph{compositions} of them are reliable, at what cost, and
against what fundamental limits. We deliberately abstract away the
internals of a solver (whether it is a person, a heuristic, a learned
model, or an exact algorithm) and study only the calculus by which
solvers are combined. The thesis is that reliable problem-solving is a
property of \emph{structure}, and that this structure obeys algebraic
laws.

\paragraph{The four combinators.}
Empirically, intelligent systems combine solvers in a small number of
recurring ways. They \emph{decompose} a problem into ordered subproblems
and solve them in sequence (planning, proofs, pipelines). They run
\emph{several} attempts and aggregate (committees, juries,
self-consistency, ensembles). They \emph{check} candidate answers and
keep only those that pass (proofs verified, code unit-tested, claims
corroborated). And they \emph{recurse}, treating a subproblem with the
very same repertoire. We argue that these four (sequential composition,
parallel ensembling, verification gating, and recursion) are not an
arbitrary list but a generating set for a well-behaved algebra of
solvers, in which reliability and cost are homomorphic images of
structure.

\paragraph{Contributions.}
\begin{itemize}
\item \textbf{A decomposition algebra} (\S\ref{sec:model}--\S\ref{sec:algebra}).
We model solvers as morphisms in the Kleisli category of the
subdistribution monad (a Markov category~\cite{fritz2020,chojacobs2019}),
and define four combinators that generate the free algebra $\Alg(B)$ over
a base set $B$. Reliability and cost are valuations into an ordered
monoid and a commutative semiring, respectively.

\item \textbf{Composition laws} (\S\ref{sec:laws}). We prove how each
combinator transforms reliability: sequential composition degrades
(Lemma~\ref{lem:seq}), majority voting amplifies above a threshold
(Lemma~\ref{lem:vote}), and, as our key primitive, a verification gate
multiplies the \emph{odds} of correctness by the verifier's likelihood
ratio (Lemma~\ref{lem:odds}).

\item \textbf{Amplification and a threshold dichotomy}
(\S\ref{sec:amp}). Theorem~\ref{thm:amp} achieves reliability $1-\delta$
at verification depth $O(\log 1/\delta)$ whenever the verifier is
informative; Theorem~\ref{thm:threshold} shows this is sharp: at the
critical parameters, no amplification is possible. This is the
problem-solving analogue of fault-tolerance threshold
theorems~\cite{vonneumann1956,aharonov1997}.

\item \textbf{Self-organization as a fixed point} (\S\ref{sec:lattice}).
Strategies form a complete lattice under refinement; a greedy,
budget-aware improvement operator is monotone, so by Knaster--Tarski
\cite{tarski1955} it has a least fixed point, the canonical
\emph{self-organized} strategy, which we characterize as
verifier-saturated and marginal-rate-equalizing.

\item \textbf{Matching limits} (\S\ref{sec:lower}). An information ceiling
bounds per-gate amplification (Theorem~\ref{thm:info}); shared error
causes impose a strictly positive voting floor exactly when they can push
the committee below chance, so diversity is \emph{necessary}
(Theorem~\ref{thm:diversity}); and a no-free-lunch corollary
\cite{wolpert1997} shows decomposition gains are paid for with priors
matched to the problem.
\end{itemize}

A companion report instantiates this algebra as a concrete,
model-agnostic orchestration harness (\textsc{Maestro Order}) and
measures the predicted laws; here we develop the theory in the abstract.

\section{The Solver Model}\label{sec:model}
We work over a \emph{problem family}: a measurable space of instances
$\X$, an answer space $\A$ (with a distinguished symbol $\bot\notin\A$
for \emph{abstention}), and a \emph{correctness oracle}
$Y^\star:\X\to 2^{\A}\setminus\{\emptyset\}$ assigning to each instance
its nonempty set of acceptable answers. A distribution $\D$ over $\X$
specifies which instances occur and how often.

\begin{definition}[Solver]\label{def:solver}
A \emph{solver} is a Markov kernel $s:\X\rightsquigarrow \A\cup\{\bot\}$,
i.e.\ a measurable map $\X\to\Delta(\A\cup\{\bot\})$ from instances to
(sub)distributions over answers and abstention. Write
$s(\cdot\mid x)$ for the output law on instance $x$.
\end{definition}

Two scalar valuations summarize a solver. Let $\mathrm{cov}(s)\eqdef
\Pr_{x\sim\D}[\,s(x)\neq\bot\,]$ be its \emph{coverage} (the probability
it commits to an answer).

\begin{definition}[Reliability]\label{def:rel}
The \emph{reliability} of $s$ is the conditional correctness rate on
committed answers,
\[
\rel(s)\;\eqdef\;
\Pr_{x\sim\D,\;a\sim s(x)}\!\big[a\in Y^\star(x)\,\big|\,a\neq\bot\big],
\]
with the convention $\rel(s)=0$ if $\mathrm{cov}(s)=0$. We also use the
\emph{worst-case} variant
$\underline{\rel}(s)=\inf_{x\in\X}\Pr_{a\sim s(x)}[a\in
Y^\star(x)\mid a\neq\bot]$ when distribution-freeness is required.
\end{definition}

Separating coverage from reliability is deliberate: a solver may raise
reliability by abstaining more (refusing hard instances), and the
coverage/reliability trade-off is a recurring theme
(\S\ref{sec:laws},~\S\ref{sec:frontier}).

\begin{definition}[Cost]\label{def:cost}
A \emph{cost} valuation assigns to $s$ a value $\cost(s)$ in a commutative
semiring $(\mathcal{K},\oplus,\otimes,0,1)$. We instantiate
$\mathcal{K}=(\R_{\ge0}\cup\{\infty\},+,\cdot)$, with $\cost(s)$ the
expected number of \emph{base-solver invocations} used by $s$. Sequential
work adds ($+$); independent repetition multiplies counts by branching
factor.
\end{definition}

\begin{table}[t]
\centering\small
\caption{Notation.}
\label{tab:notation}
\begin{tabular}{@{}ll@{}}
\toprule
Symbol & Meaning \\
\midrule
$\X,\A,\bot$ & instances, answers, abstention \\
$Y^\star(x)$ & acceptable-answer set (oracle) \\
$s,g,v$ & solver, generator, verifier \\
$\rel(s)$ & reliability (Def.~\ref{def:rel}) \\
$\mathrm{cov}(s)$ & coverage (commit probability) \\
$\cost(s)$ & expected base invocations \\
$\beta,\alpha$ & verifier completeness, false-acceptance \\
$\LR=\beta/\alpha$ & verifier discrimination (likelihood ratio) \\
$\odds=\rel/(1{-}\rel)$ & odds; $\ell=\log\odds$ log-odds \\
$\seq,\oplus_A,V_v,\mu$ & sequential, vote, verify, recurse \\
$\Alg(B)$ & decomposition algebra over base $B$ \\
\bottomrule
\end{tabular}
\end{table}

\paragraph{Categorical view.}
Solvers compose. Fix the Kleisli category $\mathbf{Stoch}$ of the
subdistribution monad: objects are (typed) problem spaces, a morphism
$A\rightsquigarrow B$ is a Markov kernel, and composition is the
Chapman--Kolmogorov integral. $\mathbf{Stoch}$ is a \emph{Markov
category}~\cite{fritz2020,chojacobs2019}: it is symmetric monoidal under
the product $\otimes$, with copy/discard structure modelling
duplication and erasure of intermediate results. Sequential
decomposition is categorical composition; running solvers ``side by
side'' is the monoidal tensor. This is the ambient category in which our
algebra lives; we keep the categorical language light and verify all
quantitative claims directly. Readers unfamiliar with category theory can
read this paragraph as saying only that randomized solvers can be wired
together in sequence and in parallel, and that both operations behave
well.

\section{The Decomposition Algebra}\label{sec:algebra}
Let $B$ be a finite set of \emph{base solvers}. The decomposition algebra
$\Alg(B)$ is the smallest set of solvers containing $B$ and closed under
the four combinators below. Figure~\ref{fig:combinators} depicts them.

\begin{figure}[t]
\centering
\begin{tikzpicture}[
  box/.style={draw,rounded corners,minimum height=6mm,minimum width=9mm,font=\scriptsize},
  >=Latex,node distance=5mm,font=\scriptsize]
\node[box] (s1) {$s_1$};
\node[box,right=of s1] (s2) {$s_2$};
\node[box,right=of s2] (s3) {$s_3$};
\draw[->] (s1)--(s2); \draw[->] (s2)--(s3);
\node[left=1mm of s1] {$x$}; \node[right=1mm of s3] {$y$};
\node[above=0.5mm of s2,font=\scriptsize\bfseries] {(a) sequential $\seq$};
\begin{scope}[yshift=-23mm]
\node[box] (p2) {$s$};
\node[box,above=2mm of p2] (p1) {$s$};
\node[box,below=2mm of p2] (p3) {$s$};
\node[draw,circle,right=8mm of p2,inner sep=1pt] (agg) {$A$};
\draw[->] (p1)-|(agg); \draw[->] (p2)--(agg); \draw[->] (p3)-|(agg);
\node[left=1mm of p2] {$x$}; \node[right=1mm of agg] {$y$};
\node[above=2mm of p1,font=\scriptsize\bfseries] {(b) parallel $\oplus_A$};
\end{scope}
\begin{scope}[yshift=-50mm]
\node[box] (g) {$g$};
\node[draw,diamond,right=7mm of g,inner sep=0pt,aspect=1.6] (v) {$v$};
\draw[->] (g)--(v);
\draw[->] (v) to[out=120,in=60,looseness=2] node[above,font=\tiny]{reject} (g);
\node[left=1mm of g] {$x$}; \node[right=1mm of v] {accept $\to y$};
\node[font=\scriptsize\bfseries] at ($(g)!0.5!(v)+(0,11mm)$) {(c) verify--refine $V_v$};
\end{scope}
\begin{scope}[yshift=-68mm]
\node[box] (r) {$r$};
\node[box,right=6mm of r] (sub) {$\Alg$};
\node[box,right=6mm of sub] (m) {$\textstyle\bigsqcup$};
\draw[->] (r)--(sub); \draw[->] (sub)--(m);
\draw[->] (sub) to[out=300,in=240,looseness=1.5] (sub);
\node[left=1mm of r] {$x$}; \node[right=1mm of m] {$y$};
\node[above=2mm of sub,font=\scriptsize\bfseries] {(d) recursion $\mu$};
\end{scope}
\end{tikzpicture}
\caption{The four generators of the decomposition algebra. (a) ordered
subproblems; (b) replicated solvers aggregated by $A$; (c) a generator
$g$ gated by a verifier $v$ with refinement on rejection; (d) a reducer
$r$ that emits subproblems solved recursively and recombined.}
\label{fig:combinators}
\end{figure}

\paragraph{(1) Sequential composition $\seq$.}
For solvers $s_1,\dots,s_k$ whose types chain
($s_{i+1}$ consumes the output of $s_i$), the pipeline
$s_k\seq\cdots\seq s_1$ is their Kleisli composite. It models
decomposition into \emph{dependent} subproblems: the final answer is
correct only if each stage produces a usable intermediate (we make the
dependence precise in Lemma~\ref{lem:seq}).

\paragraph{(2) Parallel ensembling $\oplus_A$.}
Given a solver $s$, a replication count $n$, and an \emph{aggregator}
$A:(\A\cup\{\bot\})^n\rightsquigarrow \A\cup\{\bot\}$, the ensemble
$\oplus_A(s,n)$ draws $n$ independent samples $a_1,\dots,a_n\sim s(x)$ and
returns $A(a_1,\dots,a_n)$. The canonical aggregator is plurality vote
$A=\maj$.

\paragraph{(3) Verification gating $V_v$.}
A \emph{verifier} is a kernel $v:\X\times\A\rightsquigarrow
\{\textsf{acc},\textsf{rej}\}$. Given a generator $g$ and budget
$T\in\N$, the gate $V_v(g,T)$ repeatedly samples $a\sim g(x)$, returns
the first $a$ with $v(x,a)=\textsf{acc}$, and abstains ($\bot$) after $T$
rejections. A verifier is summarized by its \emph{completeness}
$\beta\eqdef\Pr[v=\textsf{acc}\mid a\in Y^\star(x)]$ and \emph{soundness}
$1-\alpha$, where $\alpha\eqdef\Pr[v=\textsf{acc}\mid a\notin
Y^\star(x)]$ is its false-acceptance rate. Its \emph{discrimination} is
the likelihood ratio $\LR\eqdef\beta/\alpha\in[0,\infty]$.

\paragraph{(4) Recursion $\mu$.}
A \emph{reducer} $r:\X\rightsquigarrow \X^{\le m}$ maps an instance to a
finite tuple of subinstances, with a recombiner
$\bigsqcup:\A^{\le m}\rightsquigarrow\A$. The recursive solver
$\mu\,F$ is the least solution of $S = F(S)$ where
$F(S)= \bigsqcup\circ\, S^{\otimes}\circ r$ applies $S$ to each
subinstance. Well-posedness (a least fixed point exists) is established
in \S\ref{sec:lattice}.

\begin{definition}[Decomposition algebra]\label{def:algebra}
$\Alg(B)$ is the closure of $B$ under $\{\seq,\oplus_A,V_v,\mu\}$. An
element of $\Alg(B)$ is a \emph{strategy}; its syntax tree is its
\emph{organization}.
\end{definition}

\paragraph{The algebra is not a metaphor.}
The four combinators are the formal skeleton of organizations that already
work. Table~\ref{tab:instances} lists canonical instantiations across very
different substrates; in each case the ``verifier'' is whatever cheaply
certifies a candidate, and the reliability laws of \S\ref{sec:laws} apply
unchanged. That a jury, a compiler's test suite, a proof checker, and a
MapReduce job are all the same algebra evaluated on different base solvers
is the unifying claim of this paper.

\begin{table}[t]
\centering\footnotesize
\caption{Canonical instantiations of the decomposition algebra. Each row
fixes a base solver and a verifier; the combinators and laws are shared.}
\label{tab:instances}
\begin{tabular}{@{}llll@{}}
\toprule
System & Base solver & Verifier $v$ &
\begin{tabular}[t]{@{}l@{}}Dominant\\combinator\end{tabular} \\
\midrule
Math proof & prover & proof checker & $V_v$ (high $\LR$) \\
Software & coder & test suite / types & $V_v\!\seq$ \\
Jury / panel & juror & --- & $\oplus_{\maj}$ \\
Ensemble ML & weak learner & --- & $\oplus_{\maj}$ \\
Science & lab/group & replication & $V_v\,\oplus$ \\
MapReduce & mapper & re-execution & $\mu\,\seq$ \\
\bottomrule
\end{tabular}
\end{table}

\paragraph{Valuations are (lax) homomorphisms.}
Cost is an exact homomorphism into $\mathcal{K}$:
$\cost(s_k\seq\cdots\seq s_1)=\sum_i\cost(s_i)$,
$\cost(\oplus_A(s,n))=n\,\cost(s)+\cost(A)$, and
$\cost(V_v(g,T))=\E[N]\,(\cost(g)+\cost(v))$ with $N\le T$ the number of
rounds. Reliability is a \emph{lax} homomorphism: it does not factor
through structure exactly, but is bounded above and below by explicit
functions of the children's reliabilities. Deriving those bounds is the
goal of the next section.

\paragraph{Two bands: the log-odds semiring.}
Reliability composes most cleanly not in $[0,1]$ but in \emph{log-odds}.
Write $\ell(s)\eqdef\log\frac{\rel(s)}{1-\rel(s)}\in\R\cup\{\pm\infty\}$.
The point of the next section is that the natural operations live in two
algebraic bands: a \emph{reliability band}, where verification cascades
\emph{add} log-odds, and a \emph{cost band}, where work \emph{adds}
invocation counts.

\begin{proposition}[Log-odds additivity]\label{prop:semiring}
On the sub-algebra generated by verification gates, the map
$\ell$ is a monoid homomorphism from conditionally independent gate
cascades (under composition) to $(\R,+)$: stacking gates $V_{v_1},\dots,V_{v_k}$ sends
$\ell\mapsto \ell+\sum_i\log\LR_i$. Pairing $\ell$ with cost
$\cost\in(\R_{\ge0},+)$ yields a graded monoid $(\R\times\R_{\ge0},+)$ in
which the slope $\Delta\ell/\Delta\cost$ is the marginal rate optimized in
\S\ref{sec:lattice}.
\end{proposition}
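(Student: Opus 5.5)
The plan is a single-gate Bayes computation, then induction over the cascade, then the cost pairing.

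\textbf{Single gate.} Fix a generator $g$ with reliability $\rel=\rel(g)$ and odds $\odds=\rel/(1-\rel)$. Take a verifier $v$ with completeness $\beta$ and false-acceptance rate $\alpha$. An answer $a\sim g(x)$ is returned by $V_v(g,T)$ exactly when it is accepted. The draws are i.i.d., and each round has the same conditional law given acceptance. So the law of the returned answer, conditioned on commitment, equals the law of a single draw conditioned on $v=\textsf{acc}$. The budget $T$ therefore affects only coverage, not reliability. Bayes' rule gives
\[
\frac{\Pr[a\in Y^\star\mid \textsf{acc}]}{\Pr[a\notin Y^\star\mid \textsf{acc}]}
=\frac{\beta\,\rel}{\alpha\,(1-\rel)}=\LR\cdot\odds ,
\]
and hence $\ell\mapsto\ell+\log\LR$. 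This is the content of Lemma~\ref{lem:odds}, which I would invoke directly. If needed, I would reprove it in the two lines above, noting that $\beta$ and $\alpha$ must be interpreted as rates averaged under the generator's joint law over $(x,a)$.

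\textbf{Cascades.} Interpret ``stacking'' $V_{v_k}\circ\cdots\circ V_{v_1}$ as follows: a candidate is released only if it passes all gates. Conditional independence is the hypothesis that, given the correctness indicator $C=\mathbf{1}[a\in Y^\star(x)]$, the verdicts of $v_1,\dots,v_k$ are independent, with rates $(\beta_i,\alpha_i)$. Applying Bayes to the joint acceptance event gives odds $\odds\prod_i\beta_i/\alpha_i$. Equivalently, induct on $k$: the output of the first $j$ gates is a new generator whose correctness law is the posterior. By conditional independence, gate $j+1$ still has rates $(\beta_{j+1},\alpha_{j+1})$ relative to this posterior generator. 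The single-gate step then applies. Taking logs gives $\ell+\sum_i\log\LR_i$.

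\textbf{Monoid structure.} The homomorphism claim follows from these facts:
\begin{itemize}
\item Composition of cascades concatenates gate lists.
\item The empty cascade (identity) contributes $0$.
\item The shift $\sum_i\log\LR_i$ is additive under concatenation.
\end{itemize}
So the map sending a cascade to its log-odds shift is a monoid homomorphism into $(\R,+)$, acting on $\ell$ by translation.

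\textbf{Edge cases.} I would handle these with the extended reals:
\begin{itemize}
\item $\alpha=0$ or $\rel=1$ gives $+\infty$.
\item $\beta=0$ gives $-\infty$, with zero coverage, and the convention $\rel=0$ is consistent.
\end{itemize}
These are excluded from the group-valued statement, or absorbed by working in $\R\cup\{\pm\infty\}$.

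\textbf{Cost pairing.} Cost adds under sequential stacking by the homomorphism $\cost(V_v(g,T))=\E[N](\cost(g)+\cost(v))$ and the additive rule for $\seq$. Pairing the two valuations therefore sends each gate to an increment $(\log\LR_i,\Delta\cost_i)$. Composition adds these componentwise in $\R\times\R_{\ge0}$, which is a graded commutative monoid. The slope $\Delta\ell/\Delta\cost$ is then just the ratio of the components of an increment; it is a definition, not something to prove.

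\textbf{Main obstacle.} The hard step is the cascade induction. I must check that conditional independence given $C$ survives the selection effect of earlier gates. Rejection-resampling changes the distribution over $(x,a)$. So the $\beta_i,\alpha_i$ must be assumed to hold conditionally on $C$ uniformly over the instances the earlier gates pass, for example per-instance rates, or the independence must be phrased on the joint law. I would first state the hypothesis as: for each $x$, the verdicts are independent given $(x,C)$, with rates not depending on $x$. Under that assumption the posterior computation factorizes cleanly.
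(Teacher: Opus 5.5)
Your argument is the paper's own: Lemma~\ref{lem:odds} for one gate, the joint-acceptance Bayes computation behind Theorem~\ref{thm:amp} for the cascade, and additivity of cost. The one cosmetic difference is the identity element: the paper names the uninformative gate ($\log\LR=0$), whereas your empty cascade is the more accurate choice, since an uninformative gate still adds cost. One caution, which applies equally to the paper's proof: your claim that the budget $T$ affects only coverage holds per instance, but for the $\D$-averaged $\rel$ retries reweight committed outputs toward instances with low acceptance probability, so with $T>1$ and instance-dependent correctness $p(x)$ the shift in $\ell$ is in general not exactly $\log\LR$. Your fix of per-instance verifier rates does not remove this, so the law should be read per instance, for homogeneous $p$, or with $T=1$.
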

\begin{proof}
Immediate from Lemma~\ref{lem:odds} and Theorem~\ref{thm:amp}:
$\ell$ after a cascade is $\ell_0+\sum_i\log\LR_i$, additive and
associative with identity the uninformative gate ($\log\LR=0$). Cost adds
by Definition~\ref{def:cost}. The product monoid is the stated grading.
\end{proof}

This is why log-odds is the right currency: it linearizes the strongest
combinator and turns ``how to organize'' into a linear-programming
intuition over marginal slopes.

\section{Reliability Composition Laws}\label{sec:laws}
Throughout, treat distinct invocations of a base solver as independent
unless stated otherwise; \S\ref{sec:lower} removes this assumption and
shows how much depends on it.

\subsection{Sequential composition degrades}
\begin{lemma}[Serial law]\label{lem:seq}
Let $s_1,\dots,s_k$ be stages whose composite is correct iff every stage
is correct (no error masking), with per-stage reliabilities $\rel_i$.
Then
\[
1-\sum_{i=1}^{k}(1-\rel_i)\;\le\;\rel(s_k\seq\cdots\seq s_1)\;\le\;
\min_i \rel_i,
\]
and if stage errors are independent,
$\rel(s_k\seq\cdots\seq s_1)=\prod_{i=1}^k \rel_i$.
\end{lemma}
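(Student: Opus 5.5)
The plan is to recast the statement as a claim about events on a single probability space and then apply the union bound, monotonicity of probability, and the product rule. Fix the joint law of $x\sim\D$ and the stage outputs generated by the Kleisli composite. Let $C_i$ be the event that stage $i$ is correct, meaning it produces a usable intermediate. The no-error-masking hypothesis says the composite is correct exactly on $C\eqdef\bigcap_{i=1}^k C_i$. The stage reliability $\rel_i$ is read as $\Pr[C_i]$, computed under the marginal law of the input that stage $i$ actually receives inside the pipeline. I will state this convention explicitly, because it is what makes the per-stage numbers meaningful. For simplicity I assume full coverage, so the conditioning on $a\neq\bot$ in Definition~\ref{def:rel} is trivial. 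If a stage may abstain, I will treat abstention as failure of $C_i$, so that every probability below stays unconditional.

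The three bounds then follow in order.

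\textbf{Upper bound.} Since $C\subseteq C_i$ for every $i$, monotonicity of probability gives $\Pr[C]\le\Pr[C_i]=\rel_i$. Taking the minimum over $i$ gives $\rel\le\min_i\rel_i$.

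\textbf{Lower bound.} By De Morgan, $\Pr[C^c]=\Pr\bigl[\bigcup_i C_i^c\bigr]$. The union bound gives $\Pr[C^c]\le\sum_i(1-\rel_i)$, and taking complements yields $1-\sum_i(1-\rel_i)\le\rel$. This is Boole's (Bonferroni) inequality and needs no independence.

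\textbf{Independent case.} If the events $C_1,\dots,C_k$ are mutually independent, then $\Pr\bigl[\bigcap_i C_i\bigr]=\prod_i\Pr[C_i]=\prod_i\rel_i$. A slightly more robust route uses the chain rule $\Pr[C]=\prod_i\Pr[C_i\mid C_1\cap\dots\cap C_{i-1}]$. It then suffices that each conditional factor equals $\rel_i$, i.e.\ that stage $i$ errs at rate $1-\rel_i$ regardless of which correct upstream intermediate it receives. This is the natural reading of ``stage errors are independent'' in a Kleisli pipeline.

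The main obstacle is not the inequalities but the modelling step. I must make sure that $\rel_i$ is well defined when stage $i$'s input distribution depends on the earlier stages, and that the ``correct iff every stage correct'' hypothesis lets the composite's reliability coincide with $\Pr[C]$ rather than a conditional quantity. I will handle this by defining $\rel_i$ relative to the induced input law. The product formula additionally needs the conditional-rate form of independence above. Once that is fixed, the three bounds are one line each.
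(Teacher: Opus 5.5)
Your proposal is correct and matches the paper's proof: the paper lets $E_i$ be the event that stage $i$ errs, writes correctness as $\bigcap_i\overline{E_i}$, and gets the upper bound by monotonicity, the lower bound by the union bound, and the product formula from independence. Your additional conventions (reading $\rel_i$ under the input law that stage $i$ actually receives, treating abstention as failure, and the chain-rule form of independence) are sensible clarifications of points the paper leaves implicit. They do not amount to a different route.
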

\begin{proof}
Let $E_i$ be the event that stage $i$ errs. Correctness is
$\bigcap_i \overline{E_i}$. The upper bound is monotonicity:
$\Pr[\bigcap_i\overline{E_i}]\le \Pr[\overline{E_j}]=\rel_j$ for each
$j$. The lower bound is the union bound:
$\Pr[\bigcup_i E_i]\le\sum_i(1-\rel_i)$, so
$\Pr[\bigcap_i\overline{E_i}]\ge 1-\sum_i(1-\rel_i)$. Independence gives
$\Pr[\bigcap_i\overline{E_i}]=\prod_i\Pr[\overline{E_i}]=\prod_i\rel_i$.
\end{proof}

\noindent\emph{Interpretation.} Pure decomposition into dependent steps
can only \emph{lose} reliability, and it loses geometrically with depth.
The error budget $\sum_i(1-\rel_i)$ is the right first-order accounting.
This is the problem that the other three combinators exist to solve.

\subsection{Parallel voting amplifies above one half}
Consider a decision with a unique correct answer and a per-sample
correctness probability $p$ (the binary or large-margin case).
\begin{lemma}[Vote law]\label{lem:vote}
For $n$ independent samples (take $n$ odd to avoid ties) aggregated by
majority, if $p>\tfrac12$ then
\[
\rel(\oplus_{\maj}(s,n))\;\ge\;1-\exp\!\big(-2n(p-\tfrac12)^2\big),
\]
while if $p<\tfrac12$ the same argument applied to the complement shows
the majority is correct with probability at most
$\exp(-2n(\tfrac12-p)^2)\to0$, and if $p=\tfrac12$ the votes carry no
information about the answer. For $M$-ary answers in which the correct
answer beats every alternative by expected margin $\theta>0$,
$\rel\ge 1-(M-1)\exp(-n\theta^2/2)$.
\end{lemma}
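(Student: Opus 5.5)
The plan is to reduce each clause to a concentration bound for a sum of independent indicators. Fix an instance $x$ with unique correct answer $y^\star$. Let $Z_i=\mathbf{1}[a_i=y^\star]$ for $i=1,\dots,n$. These are i.i.d.\ Bernoulli$(p)$ by the standing independence assumption of \S\ref{sec:laws}, and $S_n=\sum_i Z_i$. With $n$ odd, majority (in the binary case) is correct iff $S_n>n/2$. So
\[
1-\rel(\oplus_{\maj}(s,n))=\Pr[S_n\le n/2]=\Pr\big[S_n-np\le -n(p-\tfrac12)\big],
\]
and Hoeffding's inequality for $[0,1]$-valued summands gives at most $\exp(-2n(p-\tfrac12)^2)$. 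The case $p<\tfrac12$ is the mirror image. Majority is correct only if $S_n\ge n/2$, i.e.\ $S_n-np\ge n(\tfrac12-p)$, and the upper-tail Hoeffding bound gives $\exp(-2n(\tfrac12-p)^2)$, which tends to $0$. Because $\rel$ is an average over $x\sim\D$, a per-instance bound uniform in $x$ carries over to $\rel$. Otherwise I would state it for $\underline{\rel}$, or assume $p$ is per-instance constant.

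For $p=\tfrac12$ I will make ``no information'' precise in the binary setting. Each vote is then uniform on $\{y^\star,\bar y\}$ independently of which label is correct, so the law of $(a_1,\dots,a_n)$ is the same whether $y^\star$ or $\bar y$ is correct. Hence any aggregator, majority included, is correct with probability exactly $\tfrac12$ by symmetry. Equivalently, the likelihood ratio of the vote vector is identically $1$.

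For the $M$-ary claim I will define, for each wrong answer $b$, the variables $W_i^{(b)}=\mathbf{1}[a_i=y^\star]-\mathbf{1}[a_i=b]\in[-1,1]$. These are i.i.d.\ with mean at least $\theta$ by the margin hypothesis. Plurality fails only if some $b$ has $\sum_i W_i^{(b)}\le0$; I count ties as failures, which keeps the bound conservative. Hoeffding with range length $2$ gives
\[
\Pr\Big[\textstyle\sum_i W_i^{(b)}\le 0\Big]\le\exp\!\big(-2n^2\theta^2/(n\cdot 4)\big)=\exp(-n\theta^2/2),
\]
and a union bound over the $M-1$ wrong answers finishes. Abstentions $\bot$ need handling. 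I treat $\bot$ as just another wrong symbol, which only lowers the mean of $Z_i$ or $W_i$ and is already absorbed in $p$ or $\theta$. Alternatively, I condition on committed votes, so that the reliability definition's conditioning on $a\neq\bot$ matches.

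The main obstacle is bookkeeping rather than analysis. I need to make sure the hypothesis ``correctness probability $p$'' is per instance, not just on average over $\D$; averaged $p>\tfrac12$ does not imply amplification, since instances can be split into easy and hopeless ones. My first move is to read $p$ as a uniform per-instance lower bound, i.e.\ a statement about $\underline{\rel}$. Everything above then goes through verbatim.
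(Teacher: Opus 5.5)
Your proposal is correct and follows the paper's proof: Hoeffding's inequality on the i.i.d.\ correctness indicators for the binary case, and Hoeffding on the per-competitor count differences (range length $2$, which gives the $\exp(-n\theta^2/2)$ rate) followed by a union bound over the $M-1$ competitors. Your extra bookkeeping is sound and fills in what the paper leaves implicit: the explicit $p<\tfrac12$ upper-tail bound, the equal-likelihood argument at $p=\tfrac12$, and the point that $p$ must be a uniform per-instance bound rather than an average over $\D$. One small caveat is that at $p=\tfrac12$ an arbitrary aggregator is correct with probability exactly $\tfrac12$ only if it is label-symmetric like majority, or on average over which label is correct, so the ``likelihood ratio identically $1$'' formulation is the one to keep.
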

\begin{proof}
Let $X_j=\mathbf 1[\text{sample }j\text{ correct}]$, i.i.d.\ Bernoulli$(p)$.
Majority is correct iff $\bar X>\tfrac12$. By Hoeffding's
inequality~\cite{hoeffding1963},
$\Pr[\bar X\le\tfrac12]=\Pr[\bar X-p\le-(p-\tfrac12)]\le
\exp(-2n(p-\tfrac12)^2)$ when $p>\tfrac12$. For $M$-ary, apply a Hoeffding
bound to the gap between the true answer's count and each competitor's
and union-bound over the $M-1$ competitors.
\end{proof}

\noindent The phase transition at $p^\star=\tfrac12$ is Condorcet's jury
theorem~\cite{condorcet1785} in quantitative form: a committee of
better-than-chance jurors converges to truth; a committee of
worse-than-chance jurors converges to falsehood.

\subsection{Verification multiplies the odds}
Write the \emph{odds} of correctness as $\odds(s)\eqdef
\rel(s)/(1-\rel(s))$. The following lemma is the central tool of the
paper.

\begin{lemma}[Verification odds law]\label{lem:odds}
Let $g$ generate a correct candidate with probability $p$, and let $v$ be
a verifier with completeness $\beta$ and false-acceptance $\alpha>0$,
whose errors are independent of $g$'s given correctness. Condition on
acceptance. Then the accepted answer's odds of correctness satisfy
\[
\odds_{\mathrm{post}}\;=\;\odds_{\mathrm{pre}}\cdot \LR,
\qquad \LR=\frac{\beta}{\alpha},
\]
where $\odds_{\mathrm{pre}}=p/(1-p)$. Equivalently the post-acceptance
reliability is $r=\dfrac{p\beta}{p\beta+(1-p)\alpha}$.
\end{lemma}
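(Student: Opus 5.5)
This is Bayes' rule in odds form, and the proof is a single conditioning computation on one generate--verify round. Let $C$ be the event that the candidate $a\sim g(x)$ is correct, $a\in Y^\star(x)$, so $\Pr[C]=p$. Let $\mathsf{Acc}$ be the event $v(x,a)=\textsf{acc}$. The definitions of completeness and false-acceptance in \S\ref{sec:algebra} give $\Pr[\mathsf{Acc}\mid C]=\beta$ and $\Pr[\mathsf{Acc}\mid\overline C]=\alpha$. The hypothesis that $v$'s errors are independent of $g$ given correctness is what makes these two numbers well defined and constant. Without it, the acceptance probability could depend on which correct or incorrect answer $g$ happens to emit, and we would only get averaged rates.

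First I write the joint probabilities
\[
\Pr[C\cap\mathsf{Acc}]=p\beta,
\qquad
\Pr[\overline C\cap\mathsf{Acc}]=(1-p)\alpha .
\]
Next I take the ratio of the two posteriors. The normalizer $\Pr[\mathsf{Acc}]=p\beta+(1-p)\alpha$ cancels, leaving
\[
\frac{\Pr[C\mid\mathsf{Acc}]}{\Pr[\overline C\mid\mathsf{Acc}]}
=\frac{p}{1-p}\cdot\frac{\beta}{\alpha}
=\odds_{\mathrm{pre}}\cdot\LR .
\]
The reliability form $r=p\beta/(p\beta+(1-p)\alpha)$ is just $\Pr[C\mid\mathsf{Acc}]$ before taking the ratio. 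Since $\alpha>0$, $\LR$ is finite. The normalizer is positive except when $p=1$ and $\beta=0$; I dispose of that corner, and the edges $p\in\{0,1\}$, by the paper's conventions on $\rel$ and odds.

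The one point needing care is the link to the gate $V_v(g,T)$, whose output is the \emph{first} accepted candidate among up to $T$ i.i.d.\ rounds rather than a single conditioned round. I handle it by decomposing on the index $N$ of the accepting round. The rounds are i.i.d., and the event $\{N=j\}$ depends on rounds $1,\dots,j-1$ only through their rejections. So, conditional on $N=j$, the output is distributed as a single candidate conditioned on $\mathsf{Acc}$. Averaging over $j$ leaves the same posterior, and hence the same odds, for the committed answer. Abstentions are excluded from $\rel$ by Definition~\ref{def:rel}, so they do not affect the odds. This is the step I would write out carefully; the rest is routine algebra.
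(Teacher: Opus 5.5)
Your argument is the paper's proof: Bayes' rule on the acceptance event in odds form, where the normalizer $p\beta+(1-p)\alpha$ cancels and leaves $\odds_{\mathrm{post}}=\odds_{\mathrm{pre}}\cdot\LR$. The extra step tying this to the multi-round gate $V_v(g,T)$ is not needed for the lemma, which concerns a single conditioned round, and it holds only conditionally on the instance or when $p,\beta,\alpha$ do not vary with $x\sim\D$; otherwise conditioning on earlier rejections tilts the instance distribution toward instances where acceptance is rare, so the instance-averaged committed reliability can differ from $p\beta/(p\beta+(1-p)\alpha)$.
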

\begin{proof}
By Bayes' rule on the event $\textsf{acc}$,
\[
\frac{\Pr[\text{corr}\mid\textsf{acc}]}{\Pr[\text{wrong}\mid\textsf{acc}]}
=\frac{\Pr[\textsf{acc}\mid\text{corr}]}{\Pr[\textsf{acc}\mid\text{wrong}]}
\cdot\frac{\Pr[\text{corr}]}{\Pr[\text{wrong}]}
=\frac{\beta}{\alpha}\cdot\frac{p}{1-p}.\qedhere
\]
\end{proof}

Two consequences are immediate and important. First, verification is a
\emph{Bayesian update}: each gate contributes additively in log-odds,
$\log\left(\odds_{\mathrm{post}}\right)=\log\left(\odds_{\mathrm{pre}}\right) + \log\left(\LR\right)$. Second, the
gain from a single gate is \emph{capped by the verifier}: one gate
multiplies the odds by at most $\LR=\beta/\alpha$, no matter how the
candidate was produced; a perfect verifier ($\alpha=0,\ \LR=\infty$)
certifies correctness outright, while an uninformative one ($\LR=1$)
changes nothing. Verification converts \emph{generation luck} into
\emph{checking power}.

\begin{lemma}[Coverage of a gate]\label{lem:cov}
The per-round acceptance probability is $q=p\beta+(1-p)\alpha$, so the
gate $V_v(g,T)$ commits with probability $1-(1-q)^T$ and the expected
number of rounds is $\E[N]=(1-(1-q)^T)/q\le 1/q$.
\end{lemma}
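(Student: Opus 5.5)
The plan is to model one round of the gate $V_v(g,T)$ as a draw $a\sim g(x)$ followed by an independent verifier call. Successive rounds are then i.i.d. Bernoulli trials with success event $\textsf{acc}$. This uses the standing assumption of \S\ref{sec:laws} that distinct invocations are independent, together with the hypothesis of Lemma~\ref{lem:odds} that the verifier's errors depend on the candidate only through its correctness.

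\textbf{Step 1: per-round acceptance.} Condition on whether the candidate is correct and apply the law of total probability:
\[
q=\Pr[\textsf{acc}]=\Pr[\text{corr}]\,\Pr[\textsf{acc}\mid\text{corr}]+\Pr[\text{wrong}]\,\Pr[\textsf{acc}\mid\text{wrong}]=p\beta+(1-p)\alpha .
\]
This is the same normalizing constant that appears as the denominator in Lemma~\ref{lem:odds}.

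\textbf{Step 2: coverage.} The gate abstains exactly when all $T$ rounds reject. By independence this has probability $(1-q)^T$, so the commit probability is $1-(1-q)^T$.

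\textbf{Step 3: expected rounds.} The number of rounds is $N=\min(G,T)$, where $G$ is geometric with success probability $q$. Use the tail-sum formula
\[
\E[N]=\sum_{t\ge1}\Pr[N\ge t]=\sum_{t=1}^{T}\Pr[G\ge t]=\sum_{t=1}^{T}(1-q)^{t-1},
\]
where $\Pr[G\ge t]=(1-q)^{t-1}$ because rounds $1,\dots,t-1$ must all reject. Summing the finite geometric series gives $(1-(1-q)^T)/q$. Since the numerator is at most $1$, this yields $\E[N]\le 1/q$.

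The only delicate point is the degenerate case $q=0$, i.e.\ $p\beta=(1-p)\alpha=0$. Here the closed form is read as its limit $T$, which matches $\E[N]=T$. I would state that convention explicitly. The bound $\le 1/q$ then holds trivially as $1/q=\infty$.

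I would also remark that $p$ is interpreted as the per-round correctness probability of $g$. For a fixed instance the rounds are i.i.d., so the argument is exact pointwise. Averaging over $x\sim\D$ gives the distributional version with $q$ replaced by its instance-wise value inside the expectation. I expect this interpretive point about fixing $x$ to be the main thing to address; the computations themselves are routine.
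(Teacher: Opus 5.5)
Your proof is correct and follows the paper's own argument: i.i.d.\ rounds, the law of total probability for $q$, and the truncated-geometric computation for coverage and $\E[N]$, which you spell out via the tail-sum formula. Your handling of the degenerate case $q=0$ and of a fixed instance $x$ goes slightly beyond the paper's one-line proof, which leaves both implicit. After averaging over $x\sim\D$, the commit probability is $\E_x[1-(1-q(x))^T]$, which by concavity is in general at most, not equal to, $1-(1-\E_x q(x))^T$.
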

\begin{proof}
Rounds are i.i.d.; acceptance per round has probability $q$ by the law of
total probability. The commit probability and truncated-geometric mean
follow.
\end{proof}

\section{Amplification and the Threshold Dichotomy}\label{sec:amp}
We now stack gates. Suppose we hold a candidate and submit it to $k$
verifiers whose errors are conditionally independent given correctness,
accepting the candidate only if \emph{all} accept (re-generating
otherwise). By Lemma~\ref{lem:odds} applied $k$ times the odds multiply.

\begin{theorem}[Reliability amplification]\label{thm:amp}
Let the base generator have correctness probability $p_0\in(0,1)$ and let
$k$ conditionally independent verifiers have discriminations
$\LR_1,\dots,\LR_k$. Conditioned on joint acceptance, the reliability is
$r_k=\odds_k/(1+\odds_k)$ with $\odds_k=\frac{p_0}{1-p_0}\prod_{i=1}^k\LR_i$.
In particular, with each $\LR_i\ge\LR>1$, reliability $1-\delta$ is
attained once
\[
k\;\ge\;\frac{\log\frac{1-\delta}{\delta}+\log\frac{1-p_0}{p_0}}{\log\LR}
\;=\;O\!\Big(\tfrac{1}{\log\LR}\log\tfrac1\delta\Big),
\]
and the expected number of base invocations is of order
$k/(p_0\prod_i\beta_i)$.
\end{theorem}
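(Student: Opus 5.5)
The plan is to prove the three claims of Theorem~\ref{thm:amp} in order: the closed form for the odds, the depth bound, and the cost estimate. For the first, I will apply Bayes' rule to the joint acceptance event $\mathcal{E}=\bigcap_{i\le k}\{v_i=\textsf{acc}\}$. Conditional independence of the verifiers' verdicts given correctness gives $\Pr[\mathcal{E}\mid\text{corr}]=\prod_i\beta_i$ and $\Pr[\mathcal{E}\mid\text{wrong}]=\prod_i\alpha_i$. Hence
\[
\frac{\Pr[\text{corr}\mid\mathcal{E}]}{\Pr[\text{wrong}\mid\mathcal{E}]}=\frac{p_0}{1-p_0}\prod_{i=1}^k\frac{\beta_i}{\alpha_i}.
\]
Equivalently, I can argue by induction on $k$ using Lemma~\ref{lem:odds}. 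The posterior after $i-1$ gates serves as the prior $p$ for gate $i$, and the hypothesis of that lemma (gate $i$'s errors independent of the candidate's history given correctness) is exactly the conditional-independence assumption.

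I also have to justify that the regenerate-on-rejection loop does not bias the output. Each round is i.i.d., and the returned candidate is the first one for which $\mathcal{E}$ holds. So its law is the single-round law conditioned on $\mathcal{E}$, and $r_k=\odds_k/(1+\odds_k)$ follows. The case $\alpha_i=0$ gives $\odds_k=\infty$ and $r_k=1$, consistent with the formula.

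For the depth bound, $r_k\ge 1-\delta$ is equivalent to $\odds_k\ge(1-\delta)/\delta$, since $o\mapsto o/(1+o)$ is increasing. Using $\LR_i\ge\LR$, we have
\[
\log\odds_k\ge\log\tfrac{p_0}{1-p_0}+k\log\LR,
\]
so it suffices that $k\log\LR\ge\log\frac{1-\delta}{\delta}+\log\frac{1-p_0}{p_0}$. Dividing by $\log\LR>0$ gives the stated bound. Since $\log\frac{1-\delta}{\delta}\le\log\frac1\delta$ and $p_0$ is a fixed constant, this is $O(\frac{1}{\log\LR}\log\frac1\delta)$ as $\delta\to0$.

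For the cost, Lemma~\ref{lem:cov} applies with per-round acceptance probability
\[
q=p_0\prod_i\beta_i+(1-p_0)\prod_i\alpha_i\ \ge\ p_0\prod_i\beta_i.
\]
The expected number of rounds is therefore at most $1/q\le 1/(p_0\prod_i\beta_i)$. Each round costs one generator call plus at most $k$ verifier calls. I will count verifier calls as base invocations (or treat them at unit cost), which gives order $k/(p_0\prod_i\beta_i)$; this is an upper bound on the expected invocations.

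The main obstacle is this last step, because the notion of ``order'' is loose. It is an upper bound and not a tight estimate: the $(1-p_0)\prod_i\alpha_i$ term only helps, and evaluating verifiers sequentially with early exit can lower the per-round cost. I will state it explicitly as an upper bound. I will also remark that $\prod_i\beta_i$ can decay geometrically in $k$, so the cost stays polylogarithmic in $1/\delta$ only when the $\beta_i$ are close to $1$.
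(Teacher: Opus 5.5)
Your proposal is correct and follows essentially the same route as the paper. You apply Bayes' rule to the joint-acceptance event with likelihoods $\prod_i\beta_i$ and $\prod_i\alpha_i$, solve $\odds_k\ge(1-\delta)/\delta$ for $k$, and bound the expected number of rounds by $1/(p_0\prod_i\beta_i)$ via Lemma~\ref{lem:cov}, with $k+1$ invocations per round. Your argument that the regenerate-on-rejection loop returns the single-round law conditioned on joint acceptance makes explicit a step the paper leaves implicit.
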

\begin{proof}
Joint acceptance has likelihood $\prod_i\beta_i$ under correctness and
$\prod_i\alpha_i$ under error (conditional independence). The odds form of
Bayes gives $\odds_k=\odds_0\prod_i(\beta_i/\alpha_i)=\odds_0\prod_i\LR_i$.
Solving $r_k\ge 1-\delta\iff \odds_k\ge(1-\delta)/\delta$ and taking
logarithms yields the stated $k$. For cost, by Lemma~\ref{lem:cov} the
per-attempt joint-acceptance probability is at least
$p_0\prod_i\beta_i$, so the expected number of generate-and-check
attempts is $O(1/(p_0\prod_i\beta_i))$, each costing $k{+}1$ invocations.
\end{proof}

The depth is \emph{logarithmic} in the target error $\delta$: closing the
reliability gap is exponentially cheap in structure, provided the
verifier is informative. The next theorem shows that proviso is exactly a
phase boundary.

\begin{theorem}[Threshold dichotomy]\label{thm:threshold}
Fix a base solver and consider amplifying its reliability with bounded
per-stage cost.
\begin{enumerate}
\item \emph{(Verification.)} If $\LR>1$, then for every $\delta>0$ there
is a strategy of depth $O(\log\frac1\delta)$ and reliability $\ge1-\delta$.
If $\LR=1$, then for \emph{every} verification strategy
$\rel\le p_0$; the verifier adds nothing. If $\LR<1$ (that is,
$\beta<\alpha$), swapping the verifier's accept and reject verdicts gives
discrimination $(1-\beta)/(1-\alpha)>1$, so amplification is again
possible and $\LR^\star=1$ is the sole critical value.
\item \emph{(Voting.)} If $p>\tfrac12$, majority voting reaches any
$1-\delta$ with $n=O\!\big(\frac{1}{(p-\frac12)^2}\log\frac1\delta\big)$
samples; if $p\le\tfrac12$, $\rel(\oplus_{\maj}(s,n))$ does not exceed
$p$ and tends to $0$ for $p<\tfrac12$. The critical value is
$p^\star=\tfrac12$.
\end{enumerate}
\end{theorem}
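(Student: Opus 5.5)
The plan is to treat the two parts separately and to derive each upper (amplification) claim directly from an earlier result. The matching negative claims then come from a single observation: when the parameter sits at or below its threshold, the relevant random variables carry no information, or carry information pointing the wrong way, about correctness.

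\emph{Verification.} For $\LR>1$ we invoke Theorem~\ref{thm:amp} with all $\LR_i\ge\LR$. Stacking $k=\lceil(\log\frac{1-\delta}{\delta}+\log\frac{1-p_0}{p_0})/\log\LR\rceil=O(\log\frac1\delta)$ conditionally independent gates gives reliability $\ge1-\delta$. Because $\beta>0$, Lemma~\ref{lem:cov} keeps the cost per stage bounded.

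For $\LR=1$ we have $\beta=\alpha$, so each verdict is independent of correctness. Consider any verification strategy, meaning an adaptive sequence of generations and verdicts that ends by committing to some candidate. We argue by induction on the number of rounds that, for every candidate, the posterior probability of correctness given the whole transcript equals its prior $p_0$. The reason is that every likelihood ratio entering the Bayes update of Lemma~\ref{lem:odds} equals $1$. Hence the committed answer is correct with probability exactly $p_0$ (or $\le p_0$ once we average over selection rules, since selection is measurable with respect to an uninformative transcript), and so $\rel\le p_0$.

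For $\LR<1$ we define the flipped verifier $v'$ that accepts exactly when $v$ rejects. Its completeness is $1-\beta$ and its false acceptance is $1-\alpha$. Since $\beta<\alpha$ we get $(1-\beta)/(1-\alpha)>1$, and applying the first case to $v'$ gives amplification. Together these three cases show that $\LR^\star=1$ is the only value at which amplification fails.

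\emph{Voting.} For $p>\frac12$, Lemma~\ref{lem:vote} yields $1-\exp(-2n(p-\frac12)^2)\ge1-\delta$ once $n\ge\log(1/\delta)/(2(p-\frac12)^2)$. For $p<\frac12$, the same lemma bounds $\rel\le\exp(-2n(\frac12-p)^2)\to0$. To show $\rel\le p$ for every odd $n$, including small $n$, we would prove that $f(n)=\Pr[\mathrm{Bin}(n,p)>n/2]$ is nonincreasing in odd $n$ when $p\le\frac12$. Going from $n$ to $n+2$ changes $f$ by $\binom{n}{(n-1)/2}p^{(n+1)/2}(1-p)^{(n+1)/2}(p-(1-p))\le0$, and since $f(1)=p$ the bound follows. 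At $p=\frac12$ symmetry gives $f(n)=\frac12=p$ exactly.

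The main obstacle is the $\LR=1$ clause, because it quantifies over \emph{every} verification strategy, including adaptive re-generation and selection. The delicate point is that selection among several candidates could exploit information coming from the generator rather than the verifier. I would first restrict the claim to strategies whose only access to correctness is through verdicts, with candidates exchangeable across rounds. Within that restriction I would formalize the induction as a martingale argument: the posterior of each candidate stays at $p_0$, and optional stopping then gives $\rel\le p_0$.
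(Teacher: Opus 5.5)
Your proposal is correct and follows the paper's skeleton: Theorem~\ref{thm:amp} for $\LR>1$, verdicts independent of correctness for $\LR=1$, verdict-swapping for $\LR<1$, and Lemma~\ref{lem:vote} for $p>\tfrac12$. It is more complete than the paper's proof in two places.

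For $p<\tfrac12$, the paper's proof only obtains $\rel\to0$ from the law of large numbers, so the finite-$n$ claim $\rel(\oplus_{\maj}(s,n))\le p$ is asserted but never proved. Your increment identity $f(n+2)-f(n)=\binom{n}{(n-1)/2}\bigl(p(1-p)\bigr)^{(n+1)/2}(2p-1)$, together with $f(1)=p$, settles it for every odd $n$. The Hoeffding complement bound then supplies the limit with an explicit rate. At $p=\tfrac12$, your symmetry computation $f(n)=\tfrac12$ is exactly what $\maj$ needs. The paper's broader remark that no aggregation rule can beat $\tfrac12$ would need an extra symmetry assumption on the correct label, since a vote-ignoring constant rule can exploit a skewed prior.

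For $\LR=1$, your restriction to strategies that access correctness only through verdicts is what the paper means by ``built only from uninformative gates.'' The restriction is genuinely needed: selecting by plurality among gated candidates is a vote and can exceed $p_0$. Your posterior-given-transcript argument then covers adaptive stopping and selection, which the paper's ``composing such gates preserves the posterior'' leaves implicit. The optional-stopping step is superfluous, though. The posterior is identically $p_0$, so you get $\rel=p_0$ directly.
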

\begin{proof}
Part (1), $\LR>1$: Theorem~\ref{thm:amp}. Part (1), $\LR=1$: then
$\beta=\alpha$, so acceptance is independent of correctness; conditioning
on $\textsf{acc}$ leaves $\Pr[\text{corr}\mid\textsf{acc}]=p_0$ by
Lemma~\ref{lem:odds} with $\LR=1$, and composing such gates preserves the
posterior, so no strategy built only from uninformative gates exceeds
$p_0$. $\LR<1$: swap the verdicts as in the statement. Part (2): the upper
direction is Lemma~\ref{lem:vote}; for $p<\tfrac12$, $\bar X\to
p<\tfrac12$ a.s.\ by the law of large numbers, so majority is eventually
always wrong and $\rel\to0$; for $p=\tfrac12$ the votes are independent
of the truth, so no aggregation rule can do better than probability
$\tfrac12$.
\end{proof}

This is the problem-solving counterpart of the fault-tolerance
\emph{threshold theorems} for unreliable computation and quantum
error-correction~\cite{vonneumann1956,aharonov1997}: there is a critical
component quality above which arbitrarily reliable computation is
achievable at modest overhead, and below which it is not. Here the
``component quality'' that matters for checking is the verifier's
likelihood ratio, and for voting it is being better than chance.
Figure~\ref{fig:amp} plots the two regimes.

\begin{figure}[t]
\centering
\begin{tikzpicture}
\begin{axis}[
  width=0.50\columnwidth,height=4.0cm,
  xlabel={\scriptsize gates $k$},ylabel={\scriptsize reliability $r_k$},
  xmin=0,xmax=8,ymin=0.4,ymax=1.02,
  tick label style={font=\tiny},label style={font=\scriptsize},
  legend style={font=\tiny,at={(0.98,0.04)},anchor=south east,draw=none,fill=none},
  legend cell align=left,title style={font=\scriptsize},title={(a) amplification, $p_0{=}0.55$}]
\addplot[blue,thick,domain=0:8,samples=9] {(1.2222*(2)^x)/(1+1.2222*(2)^x)};
\addlegendentry{$\LR{=}2$}
\addplot[red,thick,domain=0:8,samples=9] {(1.2222*(4)^x)/(1+1.2222*(4)^x)};
\addlegendentry{$\LR{=}4$}
\addplot[black!60,thick,domain=0:8,samples=9] {(1.2222*(8)^x)/(1+1.2222*(8)^x)};
\addlegendentry{$\LR{=}8$}
\addplot[gray,dashed,domain=0:8] {0.999};
\end{axis}
\begin{scope}[xshift=0.54\columnwidth]
\begin{axis}[
  width=0.50\columnwidth,height=4.0cm,
  xlabel={\scriptsize samples $n$},ylabel={\scriptsize reliability},
  xmin=1,xmax=60,ymin=0,ymax=1.02,
  tick label style={font=\tiny},label style={font=\scriptsize},
  legend style={font=\tiny,at={(0.98,0.5)},anchor=east,draw=none,fill=none},
  legend cell align=left,title style={font=\scriptsize},title={(b) voting threshold}]
\addplot[blue,thick,domain=1:60,samples=60]{1-exp(-2*x*(0.10)^2)};
\addlegendentry{$p{=}0.6$}
\addplot[black!55,thick,domain=1:60,samples=60]{1-exp(-2*x*(0.02)^2)};
\addlegendentry{$p{=}0.52$}
\addplot[red,thick,domain=1:60,samples=60]{exp(-2*x*(0.10)^2)};
\addlegendentry{$p{=}0.4$}
\end{axis}
\end{scope}
\end{tikzpicture}
\caption{(a) Verification amplifies reliability geometrically in the
number of gates; larger discrimination $\LR$ reaches the $1{-}\delta$
target (dashed) in fewer gates (Theorem~\ref{thm:amp}). (b) Majority
voting amplifies above $p^\star{=}\tfrac12$ and \emph{degrades} below it
(Theorem~\ref{thm:threshold}); the curve for $p{=}0.52$ shows how slowly
near-threshold solvers improve.}
\label{fig:amp}
\end{figure}

\section{Self-Organization as a Fixed Point}\label{sec:lattice}
We now explain why structure should \emph{appear}: why a system that
locally improves itself converges to a stable, layered organization.

\paragraph{The strategy lattice.}
Let $\mathbb{S}$ be the set of strategies for a fixed problem family,
augmented with a bottom element $\bot_{\mathbb S}$ (the trivial
abstaining solver) and a top element $\top_{\mathbb S}$ (the oracle).
Define the \emph{refinement order} $\sigma\sqsubseteq\sigma'$ to mean
$\sigma'$ is obtained from $\sigma$ by a finite sequence of
\emph{reliability-monotone expansions}: wrapping a subtree in a vote
(only where the subtree is above chance), in a gate (only with
$\LR\ge1$), or in an additional refinement round, or replacing a leaf by
a decomposition whose composite dominates it pointwise. The qualifiers
matter: by the laws of \S\ref{sec:laws}, these are exactly the conditions
under which each expansion cannot decrease reliability.

\begin{lemma}[Complete lattice]\label{lem:lattice}
$(\mathbb{S},\sqsubseteq)$ is a complete lattice: every subset has a
supremum (the join obtained by parallel ensembling with a
correctness-selecting aggregator) and an infimum.
\end{lemma}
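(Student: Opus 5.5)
The plan is to use the standard criterion that a poset with a greatest element in which every nonempty subset has a supremum is a complete lattice. Infima then come for free: for $S\subseteq\mathbb S$, set $\bigwedge S\eqdef\bigvee\{\tau:\tau\sqsubseteq\sigma\ \forall\sigma\in S\}$. The lower-bound set is nonempty because $\bot_{\mathbb S}$ lies below everything, and $\bigwedge\emptyset=\top_{\mathbb S}$. The work therefore splits into three steps. First, check that $\sqsubseteq$ is a partial order. Second, check that $\bot_{\mathbb S}$ and $\top_{\mathbb S}$ are extremal. Third, construct joins.

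For the partial order, reflexivity is the empty expansion sequence, and transitivity is concatenation of finite sequences. Antisymmetry is more delicate, since every permitted expansion strictly enlarges the syntax tree. I would therefore identify strategies up to behavioural equivalence, i.e.\ equal kernels $\X\rightsquigarrow\A\cup\{\bot\}$. The argument is then: if $\sigma\sqsubseteq\sigma'\sqsubseteq\sigma$, each direction is reliability-monotone by Lemmas~\ref{lem:vote} and~\ref{lem:odds} together with the stated qualifiers, so the composite round trip neither raises nor lowers reliability. Under that quotient the two strategies coincide.

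For extremality, every strategy is reachable from the abstaining solver by ``replacing a leaf by a decomposition'', because abstention is dominated pointwise by anything. Dually, $\top_{\mathbb S}$ is adjoined by fiat as the oracle, and by the same leaf-replacement clause any strategy refines to it, since the oracle dominates pointwise.

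For joins of a nonempty $S=\{\sigma_i\}$, take $\bigvee S\eqdef\oplus_{A^\star}(\sigma_i)_i$, the parallel ensemble aggregated by the correctness-selecting aggregator $A^\star$. This aggregator returns a member of $Y^\star(x)$ whenever some component output lies in it, and abstains otherwise. The argument has two parts.
\begin{enumerate}
\item \emph{Upper bound.} $\sigma_j\sqsubseteq\bigvee S$ because the ensemble's output dominates $\sigma_j$'s pointwise, so it is a permitted dominating replacement.
\item \emph{Least upper bound.} Let $u$ be any upper bound of $S$. Then $u$ is correct wherever some $\sigma_i$ is, since refinement is pointwise-dominating. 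Hence $u$ pointwise dominates $\bigvee S$, and after quotienting, $\bigvee S\sqsubseteq u$.
\end{enumerate}
When $S$ is infinite, the ensemble is not a finite syntax tree. In that case set $\bigvee S=\top_{\mathbb S}$ if the pointwise correctness sets exhaust the oracle, and otherwise use the limiting kernel, adjoining it to $\mathbb S$ as a completion element. That is admissible because $\mathbb S$ was already augmented with idealized extremal elements.

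The main obstacle is that $\sqsubseteq$ is defined syntactically, via expansion sequences, while the join argument is semantic, via pointwise dominance. I would first try to prove the bridging claim that $\sigma\sqsubseteq\sigma'$ holds iff $\sigma'$ pointwise dominates $\sigma$ in correctness. The forward direction follows from the monotonicity qualifiers. The reverse direction uses the leaf-replacement clause applied at the root, treating the whole strategy as a leaf. With that equivalence, the lattice becomes, modulo the quotient, a sublattice of the powerset of ``instances solved'' sets closed under arbitrary unions. Completeness is then inherited from the powerset.
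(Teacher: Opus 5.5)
\textbf{The gap: your bridging claim fails in the paper's setting.} Everything rests on the claim that $\sigma\sqsubseteq\sigma'$ iff $\sigma'$ pointwise dominates $\sigma$, and this does not hold here, for two reasons.

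First, the dominating-replacement clause in the definition of $\sqsubseteq$ applies only to \emph{leaves}, and no listed expansion ever deletes the root combinator. So, for instance, $\oplus_{\maj}(b,3)\not\sqsubseteq b'$ for a base solver $b'$, however strongly $b'$ dominates. ``Treating the whole strategy as a leaf'' changes the definition. Your upper-bound step $\sigma_j\sqsubseteq\bigvee S$ already depends on it for compound $\sigma_j$: a vote replicates a single subtree, and wrapping $\sigma_j$ into a heterogeneous selecting ensemble is not an expansion.

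Second, and more seriously, even if you adopt the semantic order outright, strategies are Markov kernels. Per-instance correctness is therefore a probability $p_\sigma(x)=\Pr_{a\sim\sigma(x)}[a\in Y^\star(x)]$, not membership in a set of ``instances solved.'' An upper bound $u$ of $\{\sigma_1,\sigma_2\}$ then only needs $p_u\ge\max(p_{\sigma_1},p_{\sigma_2})$. The selecting ensemble over independent runs, however, is correct with probability $1-(1-p_{\sigma_1}(x))(1-p_{\sigma_2}(x))$. Take $\sigma_1\sqsubseteq\sigma_2$ with both probabilities in $(0,1)$ at some $x$: the join must be $\sigma_2$, yet the ensemble lies strictly above it. So your ``least'' step (``$u$ is correct wherever some $\sigma_i$ is'') fails, and the powerset picture is valid only for strategies with $0/1$ correctness per instance. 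With your aggregator, which abstains when no component is correct, and dominance measured by reliability on committed answers, it is worse still. $\oplus_{A^\star}(\sigma)$ filters out every wrong answer, so $\bigvee\{\sigma\}\neq\sigma$ as soon as $\sigma$ mixes right and wrong answers on some instance.

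\textbf{Antisymmetry and the smaller slips.} Equal reliability after a round trip does not make two kernels equal. Under your semantic reading, quotienting by kernels is also the wrong choice: two kernels emitting different members of $Y^\star(x)$ dominate each other pointwise but differ. Quotient by mutual refinement $\sigma\sqsubseteq\sigma'\sqsubseteq\sigma$ itself, which makes antisymmetry tautological. Note too that if every expansion strictly enlarged the tree, as you say, antisymmetry would be immediate by size; it is your root-replacement reading that breaks it. The criterion you quote needs a least element, not a greatest one, though you do use $\bot_{\mathbb S}$ correctly. Finally, ``adjoin the limiting kernel'' for infinite $S$ is an ad hoc completion whose closure you would still have to verify.

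\textbf{How the paper avoids this.} It asserts finite joins via the ensemble, argued only as an upper bound, and finite meets via restriction of coverage. It then passes to the Dedekind--MacNeille completion, which makes \emph{any} poset a complete lattice while preserving existing joins and meets. That route needs neither your bridging claim nor leastness of the ensemble. The price is that what is complete is the completion of $\mathbb S$, and the ensemble is an idealized upper bound rather than a proven supremum.

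\textbf{Salvaging your construction.} To keep your more explicit route, either restrict to $0/1$-correct strategies under the semantic order, where your union-closed-family argument goes through, or take the pointwise supremum of correctness profiles as the join and drop the claim that ensembling realizes it.
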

\begin{proof}[Proof sketch]
Finite joins exist by $\oplus$ with an idealized selector that returns a
correct member if any branch is correct, which dominates each branch in
$\sqsubseteq$; finite meets exist dually by restricting coverage to the
common domain. Arbitrary joins/meets exist by Dedekind--MacNeille
completion of the resulting poset, which adjoins all suprema and infima
while preserving existing ones~\cite{daveypriestley2002}. The adjoined
$\top_{\mathbb S},\bot_{\mathbb S}$ are the global bounds. (The join is
an idealization used only to give the order a complete-lattice structure;
the improvement operator below never needs to construct it.)
\end{proof}

\paragraph{The improvement operator.}
Fix a cost budget $\kappa$ and a marginal-rate threshold $\lambda>0$.
Define $\Phi_\kappa:\mathbb{S}\to\mathbb{S}$ as the operator that, given a
strategy $\sigma$, applies the single combinator (add a gate, add a vote,
deepen a decomposition) that maximizes the \emph{marginal log-odds gain
per unit cost}, provided that rate exceeds $\lambda$ and the budget
$\kappa$ is not exhausted; otherwise $\Phi_\kappa(\sigma)=\sigma$. (Ties
in the maximization are broken by a fixed priority order, so
$\Phi_\kappa$ is a well-defined function.)

\begin{algorithm}[t]
\caption{\textsc{SelfOrganize}: iterate the improvement operator}
\label{alg:selforg}
\begin{algorithmic}[1]
\State \textbf{input:} seed strategy $\sigma_0$, budget $\kappa$, rate
threshold $\lambda$, candidate expansions $\mathcal{E}$
\State $\sigma\gets\sigma_0$
\While{$\cost(\sigma)<\kappa$}
  \State $e^\star\gets\argmax_{e\in\mathcal{E}(\sigma)}
         \dfrac{\Delta\log\odds(e)}{\Delta\cost(e)}$
         \Comment{best marginal log-odds per cost}
  \If{$\dfrac{\Delta\log\odds(e^\star)}{\Delta\cost(e^\star)}\le\lambda$}
     \State \textbf{break} \Comment{verifier-saturated: no rate exceeds $\lambda$}
  \EndIf
  \State $\sigma\gets e^\star(\sigma)$
         \Comment{apply gate / vote / decomposition (monotone)}
\EndWhile
\State \textbf{return} $\sigma$
\end{algorithmic}
\end{algorithm}

Algorithm~\ref{alg:selforg} realizes $\Phi_\kappa$ as hill-climbing on the
strategy lattice. Its termination and the structure of its output are not
implementation details but theorems:

\begin{lemma}[Monotonicity]\label{lem:mono}
Under the reliability-monotone expansion order, $\Phi_\kappa$ is
order-preserving: $\sigma\sqsubseteq\sigma'\Rightarrow
\Phi_\kappa(\sigma)\sqsubseteq\Phi_\kappa(\sigma')$.
\end{lemma}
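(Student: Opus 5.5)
The order $\sqsubseteq$ is generated by reliability-monotone expansions. The natural route is therefore to show the two things needed for monotonicity: $\Phi_\kappa$ is \emph{inflationary} ($\sigma\sqsubseteq\Phi_\kappa(\sigma)$ always), and its output on $\sigma'$ still lies above its output on $\sigma$.

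\textbf{Step 1: $\Phi_\kappa$ is inflationary.} If no expansion clears the rate threshold $\lambda$, or the budget $\kappa$ is exhausted, then $\Phi_\kappa(\sigma)=\sigma$. Otherwise $\Phi_\kappa(\sigma)=e^\star(\sigma)$ for a single combinator application $e^\star$. Its marginal log-odds gain is positive, since it exceeds $\lambda>0$. It is also one of the admissible expansions in the definition of $\sqsubseteq$: a gate with $\LR\ge1$ by Lemma~\ref{lem:odds}, a vote on an above-chance subtree by Lemma~\ref{lem:vote}, or a dominating decomposition. So $\sigma\sqsubseteq\Phi_\kappa(\sigma)$.

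\textbf{Step 2: reduce to comparing with $\sigma'$.} Suppose $\sigma\sqsubseteq\sigma'$. Then $\sigma'$ is obtained from $\sigma$ by a finite expansion sequence $e_1,\dots,e_m$. There are two cases.
\begin{enumerate}
\item If $\Phi_\kappa(\sigma)=\sigma$, then $\Phi_\kappa(\sigma)=\sigma\sqsubseteq\sigma'\sqsubseteq\Phi_\kappa(\sigma')$, using Step 1 and transitivity. Transitivity holds because concatenating two finite expansion sequences gives a finite expansion sequence.
\item If $\Phi_\kappa(\sigma)=e^\star(\sigma)$, we must show $e^\star(\sigma)\sqsubseteq\Phi_\kappa(\sigma')$. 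It suffices to show $e^\star(\sigma)\sqsubseteq\sigma'$ or $e^\star(\sigma)\sqsubseteq e'(\sigma')$ for the expansion $e'$ that $\Phi_\kappa$ picks at $\sigma'$.
\end{enumerate}

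\textbf{Step 3: the main obstacle, a commutation argument.} The hard part is case 2. The plan is to show that expansions commute up to $\sqsubseteq$: the wrap $e^\star$, applied at a subtree position of $\sigma$, can be transported to the corresponding position of $\sigma'$. Subtrees of $\sigma$ survive in $\sigma'$, possibly wrapped further, because expansions only wrap or replace-by-dominating. The transported wrap $\tilde e^\star$ is then still admissible. For gates this holds because $\LR\ge1$ is a property of the verifier and not of the subtree. For votes, the subtree in $\sigma'$ has reliability at least as high, so it remains above chance. Hence $e^\star(\sigma)\sqsubseteq\tilde e^\star(\sigma')$.

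It then remains to relate $\tilde e^\star(\sigma')$ to $\Phi_\kappa(\sigma')$. Greedy choice alone does not give domination. The cleanest fix I would try is to read $\Phi_\kappa$ as the closure operator that applies every expansion whose rate exceeds $\lambda$ within budget, so that greedy tie-breaking becomes irrelevant for the order. Log-odds additivity (Proposition~\ref{prop:semiring}) is what makes rates of gates on the same subtree independent of previously applied gates. Under that reading, $\tilde e^\star$ is available at $\sigma'$ and is among those applied, so $\tilde e^\star(\sigma')\sqsubseteq\Phi_\kappa(\sigma')$. This gives $\Phi_\kappa(\sigma)\sqsubseteq\Phi_\kappa(\sigma')$.

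I expect this reconciliation to be where the argument can fail. A strict greedy single step with a finite budget need not be monotone when the budget binds at $\sigma'$ but not at $\sigma$. So the proof will likely need a standing assumption, such as the budget not binding or the closure reading above, made explicit before the transport argument.
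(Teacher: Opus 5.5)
Your Step~3 transport argument is the same route the paper's sketch takes. The obstacle you flag at the end is a genuine gap, and the paper's sketch passes over it too: it shows at most $e^\star(\sigma)\sqsubseteq\tilde e^\star(\sigma')$ and then silently identifies $\tilde e^\star(\sigma')$ with $\Phi_\kappa(\sigma')$. Under the stated definitions no argument can close this gap, because the lemma is false for the greedy one-step operator. Take a base solver $b$ with $\rel(b)\in(\tfrac12,1)$ and $\cost(b)=1$. Put $\sigma=b$ and $\sigma'=\oplus_{\maj}(b,3)$; this is an admissible vote wrap, so $\sigma\sqsubseteq\sigma'$, and $\cost(\sigma')\ge3$. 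Let $\kappa=2$, and let a verifier $v$ with $\LR>1$ and $\cost(v)\le1$ make $V_v(b,1)$ the selected expansion at $b$. Its cost increment is $\cost(v)$, since $\E[N]=1$ when $T=1$ by Lemma~\ref{lem:cov}. Then $\Phi_\kappa(\sigma)=V_v(b,1)$, while $\Phi_\kappa(\sigma')=\sigma'$ because the budget is exhausted. Every expansion either inserts a node above a subtree or replaces a leaf, so a gate node, once present, is never removed. Hence $\oplus_{\maj}(b,3)$ is not reachable from $V_v(b,1)$, and $\Phi_\kappa(\sigma)\not\sqsubseteq\Phi_\kappa(\sigma')$. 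The Dedekind--MacNeille completion is an order embedding, so it does not create this relation. The same example refutes the leastness part of Theorem~\ref{thm:fixed}: $\sigma'$ is a fixed point above $\sigma_0=b$, yet every iterate from $b$ contains the gate node, so their limit does not lie below $\sigma'$.

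Your closure reading does not rescue the statement either. The budget example defeats it unchanged. Even with $\kappa=\infty$, qualifying expansions do not transport. Proposition~\ref{prop:semiring} makes the log-odds \emph{gain} of a conditionally independent gate invariant, but not its rate. The cost increment of a vote, $(n-1)\cost(g)+\cost(A)$, and of a gate with $T>1$, $(\E[N]-1)\cost(g)+\E[N]\cost(v)$, re-charges the wrapped subtree $g$. The counterpart of $g$ in $\sigma'$ is costlier and has a different acceptance probability, hence a different $\E[N]$. So an expansion clearing $\lambda$ at $\sigma$ can drop below $\lambda$ at $\sigma'$, and for the same reason the greedy argmax can switch between the two. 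In any case, ``apply every qualifying expansion'' is not a well-defined finite step without further conventions: under log-odds additivity stacked gates need not lose rate, and wraps do not commute.

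What survives is your Step~1, provided the candidate set of $\Phi_\kappa$ is restricted to the admissible expansions of $\sqsubseteq$. A positive average log-odds gain does not by itself give the pointwise domination required of a decomposition. With that restriction, inflationarity plus the $c_{\min}$ count under the budget already yields a fixed point above $\sigma_0$, reached in finitely many steps. Monotonicity, and with it the Knaster--Tarski leastness, needs a change to the definition of $\Phi_\kappa$ or of $\sqsubseteq$. A standing assumption added inside the transport argument is not enough.
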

\begin{proof}[Proof sketch]
Each admissible expansion is, by Lemmas~\ref{lem:vote}--\ref{lem:odds},
reliability non-decreasing and acts on a subtree; applying the same class
of expansion to a refinement $\sigma'$ of $\sigma$ yields a refinement of
$\Phi_\kappa(\sigma)$ because refinement is preserved under wrapping in a
gate/vote and under dependent substitution. Hence the image order is
preserved.
\end{proof}

\begin{theorem}[Existence of a self-organized strategy]\label{thm:fixed}
$\Phi_\kappa$ has a least fixed point above any seed $\sigma_0$ and a
greatest fixed point below $\top_{\mathbb S}$. Moreover, the iteration
$\sigma_0$, $\Phi_\kappa(\sigma_0)$, $\Phi_\kappa^{2}(\sigma_0)$, \dots\
reaches
the least fixed point above $\sigma_0$, denoted $\sigma^\star_\kappa$,
after finitely many steps.
\end{theorem}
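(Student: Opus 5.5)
The plan is to apply Knaster--Tarski to the restriction of $\Phi_\kappa$ to a suitable sub-lattice, and then argue finiteness of the iteration separately from the budget and rate thresholds.

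\textbf{Step 1 (extensivity).} First I would check that $\Phi_\kappa$ is inflationary: $\sigma\sqsubseteq\Phi_\kappa(\sigma)$ for every $\sigma$. This holds by construction. Either $\Phi_\kappa(\sigma)=\sigma$, or $\Phi_\kappa(\sigma)$ is $\sigma$ with a single admissible expansion applied. Admissible expansions are reliability-monotone by Lemmas~\ref{lem:vote} and~\ref{lem:odds}, and they are the generating moves of $\sqsubseteq$.

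\textbf{Step 2 (Knaster--Tarski).} Let $\uparrow\!\sigma_0=\{\sigma:\sigma_0\sqsubseteq\sigma\}$. This is a complete lattice, since it is a principal up-set of the complete lattice of Lemma~\ref{lem:lattice}: its meets are meets of elements above $\sigma_0$, which remain above $\sigma_0$, and its joins are inherited. Step~1 shows $\Phi_\kappa$ maps $\uparrow\!\sigma_0$ into itself, and Lemma~\ref{lem:mono} makes it monotone there. Tarski's theorem then gives a least fixed point $\sigma^\star_\kappa$ of $\Phi_\kappa$ in $\uparrow\!\sigma_0$, namely $\bigwedge\{\sigma\sqsupseteq\sigma_0:\Phi_\kappa(\sigma)\sqsubseteq\sigma\}$, and a greatest fixed point $\bigvee\{\sigma:\sigma\sqsubseteq\Phi_\kappa(\sigma)\}$, which lies below $\top_{\mathbb S}$.

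\textbf{Step 3 (the iteration lands on the least fixed point).} By induction with monotonicity, every iterate $\Phi_\kappa^n(\sigma_0)$ lies below every fixed point above $\sigma_0$. In particular it lies below $\sigma^\star_\kappa$. Hence if the chain $\sigma_0\sqsubseteq\Phi_\kappa(\sigma_0)\sqsubseteq\cdots$ stabilizes at some $\Phi_\kappa^N(\sigma_0)$, that element is a fixed point below $\sigma^\star_\kappa$, so it equals $\sigma^\star_\kappa$.

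\textbf{Step 4 (finite stabilization), the main obstacle.} The general lattice may have infinite ascending chains, so finiteness has to come from the operator's guards rather than from the order. Each non-trivial step applies an expansion $e^\star$ with $\Delta\log\odds/\Delta\cost>\lambda$. For this to force progress, I need each expansion to have a positive cost increment. I would argue that increment is bounded below by some $c_{\min}>0$: any gate, vote, or decomposition calls at least one additional base invocation weighted by its expected-round count, and by Lemma~\ref{lem:cov} and the definition of cost, that count is bounded away from $0$. To make this work I would assume costs of base solvers are bounded below, and if necessary restrict to expansions with $\Delta\cost\ge c_{\min}$. Given this lower bound, every non-trivial step raises $\cost(\sigma)$ by at least $c_{\min}$. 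The loop halts once $\cost(\sigma)\ge\kappa$, so there are at most $\lceil(\kappa-\cost(\sigma_0))/c_{\min}\rceil+1$ non-trivial steps before $\Phi_\kappa$ acts as the identity.

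An alternative route, in case the cost bound is delicate, uses log-odds instead. Each step increases $\log\odds$ by more than $\lambda\,\Delta\cost>0$. By Theorem~\ref{thm:info}, to be invoked, log-odds remain finite under a bounded budget. This alone does not give finiteness without a per-step minimum gain, so I would rely on the cost argument and state the $c_{\min}$ assumption explicitly.
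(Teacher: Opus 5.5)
Your proposal is correct and follows essentially the same route as the paper. It invokes Knaster--Tarski via Lemmas~\ref{lem:lattice} and~\ref{lem:mono}, gets finite stabilization because each non-trivial step costs at least $c_{\min}>0$ under the budget $\kappa$, and gets leastness from the induction $\Phi_\kappa^{n}(\sigma_0)\sqsubseteq\Phi_\kappa^{n}(\tau)=\tau$; you are somewhat more careful than the paper in making the inflationarity of $\Phi_\kappa$ explicit, restricting to $\uparrow\!\sigma_0$ so that ``least fixed point above $\sigma_0$'' is a genuine Tarski fixed point, and stating $c_{\min}>0$ as an assumption rather than asserting it.
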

\begin{proof}
By Lemma~\ref{lem:lattice} the domain is a complete lattice and by
Lemma~\ref{lem:mono} $\Phi_\kappa$ is monotone, so the Knaster--Tarski
theorem~\cite{tarski1955} (every order-preserving map on a complete
lattice has least and greatest fixed points) applies. For the
iteration: every step that changes the strategy adds at least the cost
$c_{\min}>0$ of the cheapest combinator, and total cost is capped by
$\kappa$, so after at most $\lceil\kappa/c_{\min}\rceil$ steps the
iteration is constant at some fixed point
$\sigma^\star_\kappa\sqsupseteq\sigma_0$. It is the least such: if $\tau$
is any fixed point with $\tau\sqsupseteq\sigma_0$, monotonicity gives
$\Phi_\kappa^{\,n}(\sigma_0)\sqsubseteq\Phi_\kappa^{\,n}(\tau)=\tau$ for
all $n$, hence $\sigma^\star_\kappa\sqsubseteq\tau$.
\end{proof}

\begin{proposition}[Characterization: verifier-saturation]\label{prop:sat}
At the fixed point $\sigma^\star_\kappa$, no admissible combinator yields
marginal log-odds gain per unit cost exceeding $\lambda$. Consequently the
organization is \emph{verifier-saturated}: under the amplification law
(Theorem~\ref{thm:amp}), additional gates have been added until their
marginal rate $\log\LR/\Delta\cost$ falls to $\lambda$, and the
optimal allocation of a fixed budget across $J$ candidate expansions
equalizes the marginal rates,
\[
\frac{\partial \log\odds}{\partial \cost}\Big|_{j}=\lambda
\quad\text{for all active }j,
\]
a \emph{water-filling} condition: budget flows to whichever expansions
currently offer the highest return, until all active ones offer the same
rate $\lambda$. (This is the standard first-order condition for
constrained maximization.)
\end{proposition}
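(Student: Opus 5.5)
The argument has three parts, taken in order: the fixed-point property itself, verifier-saturation for gates, and the water-filling condition for a fixed budget.

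\emph{Fixed-point property.} Read the first claim off the definition of $\Phi_\kappa$ together with Theorem~\ref{thm:fixed}. Since $\Phi_\kappa(\sigma^\star_\kappa)=\sigma^\star_\kappa$, the operator applied no expansion at $\sigma^\star_\kappa$. By construction this happens only in one of two cases. Either the best admissible marginal rate $\Delta\log\odds/\Delta\cost$ is $\le\lambda$, or the budget $\kappa$ is exhausted. In the first case, every admissible combinator has rate at most $\lambda$, because the argmax in Algorithm~\ref{alg:selforg} dominates all of them. In the second case, I would restrict the claim to combinators affordable within the remaining budget. Equivalently, I would treat unaffordable expansions as having rate $-\infty$, and state this convention explicitly.

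\emph{Verifier-saturation.} Specialize to gate expansions. By Proposition~\ref{prop:semiring} (equivalently Lemma~\ref{lem:odds} and Theorem~\ref{thm:amp}), adding a conditionally independent gate with discrimination $\LR$ raises $\ell$ by exactly $\log\LR$. Its cost increment $\Delta\cost$ comes from Lemma~\ref{lem:cov}. The gate's marginal rate is therefore $\log\LR/\Delta\cost$. Lemma~\ref{lem:cov} also shows that $\Delta\cost$ is nondecreasing as gates stack: each additional gate lowers the joint acceptance probability and so increases the expected number of rounds. Hence the rate of the next gate is nonincreasing along the iteration. The greedy procedure keeps adding gates while this rate exceeds $\lambda$ and stops at the first one whose rate is $\le\lambda$. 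This gives ``falls to $\lambda$'' in the discrete sense: the last accepted gate has rate $>\lambda$ and the next has rate $\le\lambda$. Equality holds in the continuous relaxation.

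\emph{Water-filling.} Relax each of the $J$ candidate expansion types to a continuous intensity $c_j\ge0$ of spend, with concave gain $f_j(c_j)$ in log-odds. Concavity follows from the diminishing returns just established for gates, and from the Hoeffding exponent in Lemma~\ref{lem:vote} for votes, where the log-odds gain grows linearly in $n$ while cost per sample stays fixed. I would then maximize $\sum_j f_j(c_j)$ subject to $\sum_j c_j\le\kappa$ and apply the KKT conditions with multiplier $\lambda$. Every active $j$ (those with $c_j>0$) satisfies $f_j'(c_j)=\lambda$, and every inactive $j$ satisfies $f_j'(0)\le\lambda$. To connect the greedy fixed point to this KKT point, I would use the standard fact that greedy allocation on separable concave objectives is optimal up to one discretization step.

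\textbf{Main obstacle.} The hardest step is justifying additivity and concavity of log-odds gains across heterogeneous expansions. Votes and gates interact, and log-odds is only a lax homomorphism. I would first try to restrict to the separable regime where each expansion acts on a disjoint subtree, or on the gate cascade alone where Proposition~\ref{prop:semiring} gives exact additivity. I would state the water-filling condition as a first-order condition holding in that regime.
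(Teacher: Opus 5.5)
\textbf{Gap: the claim that marginal gate cost is nondecreasing is false, so the concavity you need for Part~3 fails.}

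Your Part~1 is, in substance, the paper's entire proof. At $\sigma^\star_\kappa$ the operator $\Phi_\kappa$ applies nothing, so no affordable expansion has rate above $\lambda$; the paper then reads this stopping condition as KKT stationarity. Your budget-exhaustion convention is a reasonable sharpening.

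The trouble is the machinery you add to make that reading rigorous. The claim that Lemma~\ref{lem:cov} shows $\Delta\cost$ is nondecreasing as gates stack does not follow. A falling joint acceptance probability $q_k=p_0\prod_i\beta_i+(1-p_0)\prod_i\alpha_i$ shows only that total cost increases with $k$. It does not show that cost is convex in $k$, and convexity fails when $\beta$ is near $1$.

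Use the accounting from the proof of Theorem~\ref{thm:amp}, $C_k=(k+1)/q_k$, with $p_0=0.55$, $\beta=1$, $\alpha=0.2$ (so $\LR=5$):
\begin{itemize}
\item The increments of $C_k$ are about $2.13,\,2.16,\,1.94,\,1.85,\dots$, decreasing toward $1/p_0\approx1.82$.
\item With natural logarithms, the gate rate $\log\LR/\Delta\cost$ is about $0.757,\,0.746,\,0.828,\,0.868,\dots$, and stays above $0.82$ from the third gate on.
\item Counting only gate evaluations, as Table~\ref{tab:frontier} does, makes the rate constant, so it never ``falls'' at all.
\end{itemize}
So the gate gain, as a function of spend, is not concave. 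Your appeal to greedy optimality for separable concave objectives therefore has no footing, even in the pure-gate regime you singled out as safe. With $\lambda=0.75$ the greedy halts after one gate, yet every gate from the third onward clears $\lambda$, and $\log\odds-\lambda\cost$ is unbounded above along the cascade.

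\textbf{What survives, and a second issue.} Two parts of your argument do not need the false claim:
\begin{itemize}
\item The discrete saturation in Part~2 (the next gate has rate at most $\lambda$) follows from Part~1 alone.
\item ``An \emph{optimal} allocation equalizes marginal rates'' is the necessary KKT condition. It needs only differentiable $f_j$, since linear constraints are always qualified; it does not need concavity.
\end{itemize}
What does not survive is identifying the greedy fixed point with that optimum, which is the job of your Part~3. That identification needs nondecreasing marginal gate costs as an explicit hypothesis, or it should be dropped. The hypothesis holds, for example, for the leading-order cost $k/(p_0\beta^k)$ of Theorem~\ref{thm:amp} with equal $\beta_i=\beta<1$, but not in the exact accounting above.

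Separately, the multiplier of $\sum_j c_j\le\kappa$ is the budget's shadow price, fixed by $\kappa$, not the threshold $\lambda$. ``Rate $=\lambda$ for all active $j$'' is stationarity of the priced problem $\max\sum_j f_j(c_j)-\lambda\sum_j c_j$. When the greedy stops because the budget ran out, the common active rate can exceed $\lambda$. The paper's one-line proof glosses over the same point, and your KKT step ``with multiplier $\lambda$'' inherits it.
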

\begin{proof}
At a fixed point $\Phi_\kappa(\sigma)=\sigma$, so by definition of
$\Phi_\kappa$ no available expansion has rate $>\lambda$ within budget;
hence every active expansion sits at rate $=\lambda$ (lower-rate ones are
inactive), which is the stationarity (KKT) condition of maximizing
$\log\odds$ subject to $\cost\le\kappa$ with multiplier $\lambda$.
\end{proof}

Two messages follow. First, \emph{organization is emergent and inevitable
under local improvement}: any hill-climbing process on the strategy
lattice that prefers higher reliability-per-cost terminates at a layered,
verifier-saturated structure; it does not need a designer. Second, the
shape of that structure is dictated by the laws of \S\ref{sec:laws}: it
allocates checking where information is cheapest, exactly as a rational
designer would, because the fixed point coincides with the constrained
optimum.

\section{Fundamental Limits}\label{sec:lower}
Amplification is bounded by information, by correlation, and by the
absence of free lunches.

\subsection{An information ceiling on verification}
A verifier is a channel from the latent bit ``correct/incorrect'' to its
verdict. Its discrimination cannot exceed the information that channel
carries.

\begin{theorem}[Information ceiling]\label{thm:info}
Let a verifier output verdict $W$ from latent correctness $C\in\{0,1\}$.
The expected log-odds gain it produces (averaged over the verdict, given
a correct candidate) is the divergence between its two verdict
distributions:
\[
\E\big[\log\odds_{\mathrm{post}}-\log\odds_{\mathrm{pre}}\big]
=\DKL\!\big(P_{W\mid C=1}\,\|\,P_{W\mid C=0}\big),
\]
and no cascade of verifiers that are all functions of the same evidence
$Z$ can yield a more reliable decision than the best decision based on
$Z$ itself. In particular, by the data-processing inequality, achievable
reliability is limited by the mutual information $I(C;Z)$ between
correctness and all evidence the system can observe.
\end{theorem}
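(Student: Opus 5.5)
The statement has three parts, and I will prove them in order: an exact identity for the expected log-odds gain, a "same evidence" ceiling, and the mutual-information bound.

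\emph{The identity.} I generalize Lemma~\ref{lem:odds} from the acceptance event to an arbitrary verdict $w$. Bayes' rule in odds form gives
\[
\log\odds_{\mathrm{post}}(w)-\log\odds_{\mathrm{pre}}
=\log\frac{P_{W\mid C=1}(w)}{P_{W\mid C=0}(w)} .
\]
For $w=\textsf{acc}$ this is $\log\LR$, and for $w=\textsf{rej}$ it is $\log\frac{1-\beta}{1-\alpha}$. The statement averages over the verdict given a correct candidate, so I take the expectation under $W\sim P_{W\mid C=1}$. The right-hand side is then by definition $\DKL(P_{W\mid C=1}\,\|\,P_{W\mid C=0})$. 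For a binary verdict this equals $\beta\log\frac{\beta}{\alpha}+(1-\beta)\log\frac{1-\beta}{1-\alpha}$. I should flag one point in the text: this holds conditionally on $C=1$. Under $C=0$ the expected gain is $-\DKL(P_{W\mid C=0}\,\|\,P_{W\mid C=1})$, and under the prior mixture the average posterior log-odds shift is not generally a divergence. The equality is exact only under the stated conditioning. Edge cases such as $\alpha=0$ give value $+\infty$ on both sides, consistent with a perfect verifier.

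\emph{The ceiling.} Suppose the verifiers' verdicts $W_1,\dots,W_k$ are all (possibly randomized) functions of the same evidence $Z$. Then $C\to Z\to(W_1,\dots,W_k)\to\hat C$ is a Markov chain, where $\hat C$ is any final decision, including the gate-cascade decision ``accept iff all accept''. The Bayes-optimal decision based on $Z$ minimizes error over all kernels $Z\rightsquigarrow\{0,1\}$. Composing the cascade kernels (Kleisli composition in $\mathbf{Stoch}$) gives one such kernel, so its reliability is at most the optimum. This part is routine; the only care needed is to phrase ``more reliable'' as probability of correct decision about $C$, rather than the conditional-on-commit quantity $\rel$, or else to treat abstention as a decision option on both sides.

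\emph{The mutual-information bound.} This is the step I expect to be the main obstacle, since ``limited by $I(C;Z)$'' must be made quantitative. I would first apply the data-processing inequality along the chain to get $I(C;\hat C)\le I(C;W_{1:k})\le I(C;Z)$. Next I would invoke Fano's inequality for binary $C$:
\[
h_2(P_e)\ge H(C)-I(C;Z),
\]
where $h_2$ is binary entropy and $P_e=\Pr[\hat C\neq C]$. Hence the error probability is bounded below by $h_2^{-1}\big(H(C)-I(C;Z)\big)$, and reliability $1-\delta$ requires $I(C;Z)\ge H(C)-h_2(\delta)$. Combined with the identity above, this shows a cascade of conditionally dependent verifiers sharing $Z$ cannot accumulate $\sum_i\log\LR_i$ indefinitely. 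The additive amplification of Theorem~\ref{thm:amp} therefore presupposes genuinely fresh evidence per gate, which sets up Theorem~\ref{thm:diversity}.
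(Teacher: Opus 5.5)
Your proposal is correct and follows the paper's proof essentially step for step. The expected log-likelihood ratio under $P_{W\mid C=1}$ is the divergence by definition. Any cascade decision is a kernel that factors through $Z$, so it cannot beat the best decision based on $Z$ (data processing). Fano's inequality then turns $I(C;Z)$ into a limit on reliability.

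Your write-up is more explicit than the paper's one-line appeal to Fano, since you give the bound $h_2(P_e)\ge H(C)-I(C;Z)$. Your caveat is also well taken: the mutual-information limit holds for decision accuracy, not for the conditional-on-commit reliability. A verifier with tiny $\beta$ and $\alpha=0$ shows why, since it reaches conditional reliability $1$ while carrying almost no information about $C$.
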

\begin{proof}
The posterior log-odds update on verdict $W$ is the log-likelihood ratio
$\log\frac{P_{W\mid C=1}(W)}{P_{W\mid C=0}(W)}$; its expectation when
$W\sim P_{W\mid C=1}$ is, by definition,
$\DKL(P_{W\mid1}\|P_{W\mid0})$. Verdicts are (possibly randomized)
functions of $Z$, so by the data-processing
inequality~\cite{cover2006} $I(C;\text{verdicts})\le I(C;Z)$, and a
decision based on a function of $Z$ cannot be more accurate than the
best decision based on $Z$. Fano's inequality then bounds achievable
reliability in terms of $I(C;Z)$.
\end{proof}

Stacking gates (Theorem~\ref{thm:amp}) therefore pays off only when each
gate contributes \emph{new}, conditionally independent evidence; gates
that re-read the same signal are redundant, and their effective $\LR$
collapses toward $1$. Amplification consumes information.

\subsection{Diversity is necessary}
Independence was assumed in Lemmas~\ref{lem:vote}--\ref{lem:odds}. Real
populations of solvers often make errors \emph{together}: they share
training, methods, or blind spots. We now show exactly when shared errors
put a hard floor under majority voting.

We model shared causes in the standard way: the voters are
\emph{conditionally i.i.d.}\ given a latent factor $S$ that collects
everything they have in common. (By de Finetti's theorem, every infinite
exchangeable population has this form, so this is not a special
assumption.) Write $p(S)$ for the per-voter accuracy conditional on $S$,
with $\E[p(S)]=p$, $\sigma^2=p(1-p)$, and pairwise correlation
$\gamma=\operatorname{Var}(p(S))/\sigma^2$.

\begin{theorem}[Shared-cause voting floor]\label{thm:diversity}
In the latent-factor model, with $n$ odd:
(i) the vote share satisfies
$\operatorname{Var}(\bar X)=\frac{\sigma^2}{n}+\frac{n-1}{n}\gamma\sigma^2
\xrightarrow{n\to\infty}\gamma\sigma^2$, so for $\gamma>0$ it never
concentrates at $p$; and
(ii) if $\Pr[p(S)=\tfrac12]=0$, then
\[
\lim_{n\to\infty}\Pr[\text{majority wrong}]
\;=\;\Pr\!\big[p(S)<\tfrac12\big].
\]
The floor on the right is strictly positive exactly when shared
circumstances can push the whole committee below chance, and no amount
of replication removes it. In particular, in the Gaussian-copula model
(the model used in the companion simulations) the floor is strictly
positive for every $\gamma>0$.
\end{theorem}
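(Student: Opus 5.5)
Part (i) is a direct variance decomposition. Write $X_j$ for the correctness indicators, conditionally i.i.d.\ Bernoulli$(p(S))$ given $S$. Apply the law of total variance to $\bar X=\frac1n\sum_j X_j$:
\[
\operatorname{Var}(\bar X)=\E\big[\operatorname{Var}(\bar X\mid S)\big]+\operatorname{Var}\big(\E[\bar X\mid S]\big)
=\frac{\E[p(S)(1-p(S))]}{n}+\operatorname{Var}(p(S)).
\]
Next use $\E[p(S)(1-p(S))]=p-\E[p(S)^2]=p(1-p)-\operatorname{Var}(p(S))=\sigma^2-\gamma\sigma^2$. Substituting gives $\frac{\sigma^2}{n}-\frac{\gamma\sigma^2}{n}+\gamma\sigma^2=\frac{\sigma^2}{n}+\frac{n-1}{n}\gamma\sigma^2$, as claimed. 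Equivalently, one can check pairwise: $\operatorname{Cov}(X_i,X_j)=\operatorname{Var}(p(S))=\gamma\sigma^2$ for $i\neq j$, so $\gamma$ is indeed the pairwise correlation. The limit is then immediate. Non-concentration follows because a sequence converging in probability to the constant $p$, with $\bar X\in[0,1]$ bounded, would have variance tending to $0$, contradicting the limit $\gamma\sigma^2>0$.

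For (ii), condition on $S$ and reduce to the independent case via Lemma~\ref{lem:vote}. Fix $S$ with $p(S)>\tfrac12$. Hoeffding gives $\Pr[\text{majority wrong}\mid S]\le\exp(-2n(p(S)-\tfrac12)^2)\to0$. Fix $S$ with $p(S)<\tfrac12$. The complementary bound gives $\Pr[\text{majority correct}\mid S]\le\exp(-2n(\tfrac12-p(S))^2)\to0$, so $\Pr[\text{majority wrong}\mid S]\to1$. Since $n$ is odd, ties do not arise. Hence
\[
\Pr[\text{majority wrong}\mid S]\to\mathbf 1[p(S)<\tfrac12]
\]
for every $S$ outside the event $\{p(S)=\tfrac12\}$, which has probability zero by hypothesis. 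The conditional probabilities are bounded by $1$, so dominated convergence lets us take expectations over $S$, giving $\lim_n\Pr[\text{majority wrong}]=\Pr[p(S)<\tfrac12]$. The exchange of limit and expectation is the only point needing care, and the null-set hypothesis is exactly what makes it work. Without that hypothesis, the atom at $\tfrac12$ would contribute a non-degenerate CLT limit of $\tfrac12$ per unit mass, which is why it is excluded. The ``strictly positive exactly when'' clause is then a restatement: the floor equals $\Pr[p(S)<\tfrac12]$, and it does not depend on $n$.

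The Gaussian-copula claim is where I expect the main obstacle, because the model must be pinned down; I would fix the standard threshold form. Let $X_j=\mathbf 1[\sqrt{\rho}\,Z+\sqrt{1-\rho}\,\varepsilon_j\le t]$ with $Z,\varepsilon_j$ i.i.d.\ standard normal, $t=\Phi^{-1}(p)$, and latent correlation $\rho\in(0,1)$. Then $S=Z$ and $p(Z)=\Phi\big((t-\sqrt\rho Z)/\sqrt{1-\rho}\big)$. This is a continuous, strictly monotone function of a Gaussian, so its range is all of $(0,1)$. It follows that $\Pr[p(Z)=\tfrac12]=0$ and $\Pr[p(Z)<\tfrac12]=\Pr[Z>t/\sqrt\rho]>0$. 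Finally, I would note that $\gamma>0$ holds if and only if $\rho>0$: $\operatorname{Var}(p(Z))>0$ exactly when $p(\cdot)$ is nonconstant. This link ties the stated condition ``for every $\gamma>0$'' to the latent correlation parameter and completes the proof.
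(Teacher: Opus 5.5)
Your proof is correct and follows essentially the paper's route. For (i) it is a variance expansion: you use the law of total variance where the paper sums the equal pairwise covariances $\operatorname{Var}(p(S))$, which you also check. For (ii) you condition on $S$, show that the conditional majority-wrong probability tends to $\mathbf 1[p(S)<\tfrac12]$ off the null set, and finish with bounded convergence. You reach that limit via conditional Hoeffding rather than the paper's conditional law of large numbers, and you use the same explicit formula for $p(S)$ in the Gaussian-copula case.

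One step improves on the paper. You keep the latent Gaussian correlation $\rho$ separate from the indicator correlation $\gamma=\operatorname{Var}(p(S))/\sigma^2$ and verify that $\gamma>0\iff\rho>0$, whereas the paper's proof writes $\gamma$ for both.
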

\begin{proof}
(i) is the variance expansion of $\operatorname{Var}(\frac1n\sum X_j)$
with equal pairwise covariances
$\operatorname{Cov}(X_i,X_j)=\operatorname{Var}(p(S))=\gamma\sigma^2$.
(ii) Conditional on $S$, the law of large numbers gives $\bar X\to p(S)$
almost surely, so the majority indicator $\mathbf 1[\bar X>\tfrac12]$
converges to $\mathbf 1[p(S)>\tfrac12]$ outside the null event
$\{p(S)=\tfrac12\}$; bounded convergence yields the limit. For the
Gaussian-copula model,
$X_j=\mathbf 1[\sqrt{\gamma}\,S+\sqrt{1-\gamma}\,\varepsilon_j\le u]$
with independent standard normal $S,\varepsilon_j$ and $u=\Phi^{-1}(p)$,
where $\Phi$ is the standard normal distribution function; then
$p(S)=\Phi\big((u-\sqrt{\gamma}\,S)/\sqrt{1-\gamma}\big)$ takes every
value in $(0,1)$ with positive density, so
$\Pr[p(S)<\tfrac12]>0$ for every $\gamma>0$.
\end{proof}

The characterization has two sides. If shared factors only add noise but
never drag the committee below chance ($p(S)>\tfrac12$ almost surely),
majority voting still converges to the truth; correlation merely slows
it down. But if some situations make the whole committee wrong together,
those situations are lost no matter how many copies vote.

\begin{corollary}[Effective committee size]\label{cor:neff}
Matching the vote-share variance, correlated voting behaves like
independent voting with an \emph{effective} sample size
$n_{\mathrm{eff}}=\frac{n}{1+(n-1)\gamma}\to\frac1\gamma$: in terms of
how sharply its vote share concentrates, a correlated committee is never
worth more than $1/\gamma$ independent voters, however many members it
has.
\end{corollary}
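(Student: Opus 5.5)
The plan is to match second moments. For independent voters with per-voter variance $\sigma^2$, the vote share $\bar X$ has variance $\sigma^2/n_{\mathrm{eff}}$. We therefore \emph{define} $n_{\mathrm{eff}}$ as the number of independent voters with the same per-voter accuracy $p$ whose vote share has the same variance as the correlated committee. By Theorem~\ref{thm:diversity}(i), this means solving
\[
\frac{\sigma^2}{n_{\mathrm{eff}}}=\frac{\sigma^2}{n}+\frac{n-1}{n}\gamma\sigma^2
\]
for $n_{\mathrm{eff}}$.

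First, we need $\sigma^2>0$, i.e.\ $p\in(0,1)$; the degenerate cases are trivial, since every committee is then deterministic. Dividing by $\sigma^2$ gives $1/n_{\mathrm{eff}}=(1+(n-1)\gamma)/n$, and hence $n_{\mathrm{eff}}=n/(1+(n-1)\gamma)$.

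Second, for $\gamma>0$ we write $n_{\mathrm{eff}}=1/(1/n+\gamma(1-1/n))$. The denominator tends to $\gamma$, which gives the limit $1/\gamma$. For the claim ``never worth more than $1/\gamma$'', we check that $1/n+\gamma(1-1/n)=\gamma+(1-\gamma)/n\ge\gamma$ because $\gamma\le1$. Thus $n_{\mathrm{eff}}\le1/\gamma$ for every $n$, and the sequence is nondecreasing in $n$. This settles the monotone approach from below.

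The only step needing care is justifying $\gamma\le1$. Here $\gamma=\operatorname{Var}(p(S))/\sigma^2$. Since $p(S)\in[0,1]$, we have $p(S)^2\le p(S)$, so $\operatorname{Var}(p(S))\le\E[p(S)]-p^2=p(1-p)=\sigma^2$. Equivalently, by the law of total variance, $\sigma^2=\E[p(S)(1-p(S))]+\operatorname{Var}(p(S))$, and the first term is nonnegative. The case $\gamma=0$ is read with the convention $1/\gamma=\infty$: then $n_{\mathrm{eff}}=n$ and the bound holds vacuously. I would note explicitly that this is a variance-matching statement only. It does not contradict Theorem~\ref{thm:diversity}(ii), which concerns the error floor and not concentration.
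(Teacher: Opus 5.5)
Your proposal is correct and takes the paper's route: you match $\operatorname{Var}(\bar X)$ from Theorem~\ref{thm:diversity}(i) to $\sigma^2/n_{\mathrm{eff}}$, solve for $n_{\mathrm{eff}}$, and let $n\to\infty$. Your extra step proving $\gamma\le 1$ gives $1/n_{\mathrm{eff}}=\gamma+(1-\gamma)/n\ge\gamma$, and this supplies the uniform bound $n_{\mathrm{eff}}\le 1/\gamma$ for every $n$ ("never worth more"), which the paper's one-line proof leaves implicit because it only computes the limit.
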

\begin{proof}
Match variances: $\operatorname{Var}(\bar X)=\sigma^2/n_{\mathrm{eff}}$
with the expression of Theorem~\ref{thm:diversity} gives
$n_{\mathrm{eff}}=n/(1+(n-1)\gamma)$, whose limit is $1/\gamma$.
\end{proof}

For example, at correlation $\gamma{=}0.1$ even a thousand-member
committee concentrates no better than about ten independent voters, so
the value of one more correlated member is already negligible.
\emph{Diversity}, meaning conditionally independent errors, is therefore not a
luxury but a requirement for unbounded amplification. Organizations that
clone a single fallible solver inherit its blind spots no matter how
large they grow; useful committees are built from members who fail
differently.

\subsection{No free lunch}
\begin{corollary}[No universal decomposition]\label{cor:nfl}
Averaged uniformly over all problem families with a given answer space,
every strategy in $\Alg(B)$ has the same expected reliability as its base
solvers: decomposition confers no advantage absent a prior matching
structure to problems.
\end{corollary}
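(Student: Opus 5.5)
The plan is to reduce the corollary to the Wolpert--Macready no-free-lunch theorem~\cite{wolpert1997} by a symmetry (relabeling) argument on the correctness oracle. First we fix what ``averaged uniformly over all problem families'' means. We hold $\X$ (taken finite, or at least with a finite answer space $\A$ for the averaging to make sense), $\A$, and $\D$ fixed. We average over oracles $Y^\star$ drawn uniformly from the set of all maps $\X\to\A$; single-answer oracles suffice, and the general case $2^{\A}\setminus\{\emptyset\}$ is handled identically by a uniform measure invariant under permutations of $\A$. The key structural fact we need about $\Alg(B)$ is that a strategy is oracle-blind. Its base solvers, aggregators, reducers and recombiners are fixed kernels that do not depend on $Y^\star$. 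Verifiers are the one exception: a verifier with $\LR\neq1$ must correlate with $Y^\star$. So we will either treat verifiers as oracle-independent kernels (as a family-agnostic strategy must be), or average them jointly with the family under the same symmetry.

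The main step is a pointwise computation. Fix an instance $x$ and any strategy $\sigma\in\Alg(B)$. The output law $\sigma(\cdot\mid x)$ is a subdistribution on $\A\cup\{\bot\}$ that does not depend on $Y^\star$. Its conditional correctness given commitment is $\Pr_{a\sim\sigma(x)}[a=Y^\star(x)\mid a\neq\bot]$. Under uniform $Y^\star$, the value $Y^\star(x)$ is uniform on $\A$ and independent of $\sigma$'s randomness, so this quantity has expectation exactly $1/|\A|$. That value does not depend on $\sigma$. Integrating over $x\sim\D$ and over the coverage weighting gives $\E_{Y^\star}[\rel(\sigma)]=1/|\A|$, the same value as for any base solver $b\in B$.

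The ratio form of $\rel$ needs care, since the normalizer $\mathrm{cov}$ could a priori depend on $Y^\star$ through verifiers. If verifiers are oracle-blind, coverage is oracle-independent and the averaging passes through the ratio cleanly. If they are not, we instead use a permutation argument. For any bijection $\pi$ of $\A$, the map $Y^\star\mapsto\pi\circ Y^\star$ preserves the uniform measure on families. We then show that this symmetry forces the posterior probability of correctness given commitment to be $1/|\A|$ in expectation. This is the analogue of Wolpert--Macready's equal-performance statement, which holds for any algorithm whose behaviour is measurable with respect to what it observes.

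I expect the main obstacle to be the verification combinator $V_v$ together with the degenerate convention $\rel=0$ when $\mathrm{cov}=0$. Oracle-aware verifiers break the independence used in the pointwise computation. The convention breaks exact equality when some strategy abstains everywhere on a set of families of positive measure. I would handle the first by declaring the verifier part of the ``problem family'' and proving the symmetry statement jointly. I would handle the second by restricting to strategies with positive coverage, or by stating the result for committed-answer accuracy $\Pr[a\in Y^\star(x),a\neq\bot]/\E[\mathrm{cov}]$. The interpretive clause, that gains require a matching prior, then follows: any strictly better average reliability must come from a non-uniform prior over $Y^\star$ that the chosen verifiers and aggregators exploit.
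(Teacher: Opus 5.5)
Your first option, treating verifiers as oracle-independent kernels, is correct. It is the paper's argument made explicit. The paper's sketch asserts that every strategy has the same oracle-averaged correctness as its base solvers, because combinators only re-wire fixed kernels, and it cites Wolpert--Macready for this. Your pointwise computation proves that assertion directly and needs no search-theoretic machinery. It also identifies the common value: $1/|\A|$, or the corresponding permutation-invariant constant for set-valued oracles.

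You also catch a defect the sketch misses. Under the convention $\rel=0$ at zero coverage (Definition~\ref{def:rel}), a strategy such as $V_v(g,T)$ with an always-rejecting $v$ has $\rel=0$. So exact equality holds only for positive-coverage strategies, although ``no advantage'' still holds.

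Your ``main obstacle'' does not actually arise under the paper's definitions. A verifier is a fixed kernel $v:\X\times\A\rightsquigarrow\{\textsf{acc},\textsf{rej}\}$, so it is oracle-blind by construction. $\LR$ is a property of the pair (verifier, family): a fixed $v$ has $\LR>1$ on some families and $\LR<1$ on others.

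Discard the fallback branch. The permutation argument for oracle-dependent verifiers cannot succeed, because the statement is false there.
\begin{itemize}
\item Invariance of the prior under $Y^\star\mapsto\pi\circ Y^\star$ forces chance-level correctness only if the strategy's output law is \emph{invariant} under the relabeling. An oracle-reading verifier is merely \emph{equivariant}.
\item Counterexample: the perfect checker $v_{Y^\star}(x,a)=\mathbf 1[a\in Y^\star(x)]$ satisfies $v_{\pi\circ Y^\star}(x,\pi(a))=v_{Y^\star}(x,a)$. Yet $\rel(V_v(g,T))=1$ on every family whenever $g$ has full support, so the average is $1$, not chance.
\item Wolpert--Macready does not rescue this. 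It equates algorithms making the same number of queries under performance measures of the observed values. It says nothing comparing a one-query base solver with a $T$-query checked search.
\item Your ``average them jointly'' option works only if $v$ is drawn independently of $Y^\star$. In that case it is just your first computation conditioned on $v$.
\end{itemize}
So state the proof for oracle-independent components only. That is exactly the tacit hypothesis behind the paper's ``deterministic re-wirings of base-solver calls''.
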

\begin{proof}[Proof sketch]
This is the no-free-lunch theorem for search and
optimization~\cite{wolpert1997} transported to our setting: combinators
are deterministic re-wirings of base-solver calls, so over the uniform
mixture of oracles every fixed strategy induces the same marginal
distribution over (instance, answer) correctness as calling the base
solvers directly. Gains on a subfamily are exactly offset elsewhere.
\end{proof}

The corollary is not nihilistic; it is a statement about \emph{where}
reliability comes from. The combinators do not manufacture reliability;
they \emph{transport} the information already present in solvers and
verifiers to the place a decision is made, and they only help on the
structured problem families we actually face: those where verifiers are
informative and errors are diverse.

\section{The Cost--Reliability Frontier}\label{sec:frontier}
Collecting the laws, we can compare combinators on a common axis: the
cost (base invocations) to reach error $\delta$.

\begin{table}[t]
\centering\small
\caption{Work needed to reach error $\delta$ and amplification behaviour
of each combinator ($p$: base correctness; $\LR$: verifier
discrimination). Verification work counts gates; with an incomplete
verifier ($\beta<1$), regenerations add a factor of at most $\beta^{-k}$,
negligible for $\beta$ near $1$.}
\label{tab:frontier}
\begin{tabular}{@{}lll@{}}
\toprule
Combinator & Work for error $\delta$ & Regime \\
\midrule
Sequential $\seq$ & --- & anti-amplifying ($\rel\!\downarrow$) \\
Voting $\oplus_{\maj}$ & $\Theta\!\big(\frac{1}{(p-\frac12)^2}\log\frac1\delta\big)$ & needs $p>\tfrac12$ \\
Verify $V_v$ & $\Theta\!\big(\frac{1}{\log\LR}\log\frac1\delta\big)$ & needs $\LR>1$ \\
Recurse $\mu$ & depth-dependent & inherits children \\
\bottomrule
\end{tabular}
\end{table}

\begin{proposition}[Log-optimality and allocation]\label{prop:frontier}
Both voting and verification reach error $\delta$ with structure size
$\Theta(\log\frac1\delta)$ (samples for voting, gates for verification),
which is optimal for any method whose error decays at most geometrically
per unit of structure; verification's constant is governed by $\log\LR$
and voting's by $(p-\tfrac12)^2$, so verification dominates whenever a
sufficiently discriminating checker exists. Under a
fixed budget the reliability-maximizing strategy equalizes marginal
log-odds gain per cost across stages (Proposition~\ref{prop:sat}),
preferring the combinator with the largest current marginal rate.
\end{proposition}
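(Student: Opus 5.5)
The proposition bundles three claims: an upper bound on structure size for each combinator, a matching lower bound conditional on geometric decay, and an allocation statement. I would prove them in that order, each time reusing an earlier result.

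\emph{Upper bounds.} For voting, I invert Lemma~\ref{lem:vote}. Setting $\exp(-2n(p-\tfrac12)^2)\le\delta$ gives $n\ge \log(1/\delta)/(2(p-\tfrac12)^2)$, so $O\big(\frac{1}{(p-\frac12)^2}\log\frac1\delta\big)$ samples suffice. For verification, Theorem~\ref{thm:amp} gives directly
\[
k=\Big\lceil\big(\log\tfrac{1-\delta}{\delta}+\log\tfrac{1-p_0}{p_0}\big)/\log\LR\Big\rceil=O\big(\tfrac{1}{\log\LR}\log\tfrac1\delta\big).
\]
Comparing the leading constants $1/(2(p-\frac12)^2)$ and $1/\log\LR$ gives the dominance claim. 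Verification needs fewer units exactly when $\log\LR>2(p-\tfrac12)^2$, and "sufficiently discriminating" means precisely this inequality. Per Table~\ref{tab:frontier}, the regeneration overhead $\beta^{-k}$ is treated as a multiplicative factor on work, not on structure size.

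\emph{Lower bound.} Here I make the hypothesis precise. A method "decays at most geometrically per unit of structure" if there is a constant $c\in(0,1)$ such that after $m$ units the error satisfies $\epsilon_m\ge \epsilon_0 c^m$. Then $\epsilon_m\le\delta$ forces $m\ge \log(\epsilon_0/\delta)/\log(1/c)=\Omega(\log\frac1\delta)$.

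I then check that both combinators meet this hypothesis, which gives the matching $\Theta$.
\begin{itemize}
\item For verification, the exact formula of Theorem~\ref{thm:amp} gives $1-r_k=1/(1+\odds_0\LR^k)\ge \tfrac12\min(1,\odds_0^{-1})\LR^{-k}$, which is geometric with $c=1/\LR$.
\item For voting, I need an anti-concentration lower bound on the binomial tail. The plan is to use the single term $\Pr[\bar X=\frac{n-1}{2n}]$ with Stirling's formula. This gives $\Pr[\text{maj wrong}]\ge \mathrm{poly}(n)^{-1}\,(4p(1-p))^{n/2}$, which is geometric up to a polynomial factor. The polynomial factor only changes lower-order terms of $m$, so voting also needs $\Omega(\log\frac1\delta)$ samples.
\end{itemize}

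I expect this voting lower bound to be the main obstacle. It is the one genuinely new estimate, and its rate constant $\log\frac{1}{4p(1-p)}\approx 4(p-\tfrac12)^2$ matches $(p-\tfrac12)^2$ only up to constants near threshold. I would therefore state the $\Theta$ with the dependence on $p$ understood up to absolute constants, and fall back to the reverse Chernoff bound (binomial tail $\ge \frac{1}{n+1}e^{-n\DKL(1/2\|p)}$) if the Stirling computation gets messy.

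\emph{Allocation.} The fixed-budget claim follows from Proposition~\ref{prop:sat} applied to the fixed point $\sigma^\star_\kappa$ of Theorem~\ref{thm:fixed}. By Proposition~\ref{prop:semiring}, log-odds gains from independent gates add, so $\log\odds$ is additive across stages. The budget problem is therefore a separable concave maximization, concave because of diminishing returns in the vote tails and the constant $\log\LR$ per gate. For such a problem the KKT conditions equalize marginal rates at the multiplier $\lambda$. The greedy step in Algorithm~\ref{alg:selforg} picks the largest current rate, which yields the "preferring the combinator with the largest marginal rate" clause.
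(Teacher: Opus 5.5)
Your route is the paper's. You invert Lemma~\ref{lem:vote} and Theorem~\ref{thm:amp} for the upper bounds, use the geometric-decay hypothesis for the $\Omega(\log\frac1\delta)$ lower bound, and read the allocation clause off the KKT condition behind Proposition~\ref{prop:sat}. You go beyond the paper's sketch by checking that both combinators really decay at most geometrically: via $1-r_k=1/(1+\odds_0\LR^k)$ for gates, and via the single-term Stirling bound for voting. Both estimates are correct. Your allocation step rests on the same separability idealization as the paper, since Proposition~\ref{prop:semiring} gives log-odds additivity only for gate cascades, not for votes mixed with gates.

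The step that fails is the dominance comparison. Setting $1/(2(p-\tfrac12)^2)$ against $1/\log\LR$ compares two \emph{sufficient} counts. The Hoeffding constant only upper-bounds the number of votes needed, so a smaller gate count does not show that voting needs more units.

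The criterion ``exactly when $\log\LR>2(p-\tfrac12)^2$'' is also false. The true per-sample exponent of majority error is $\DKL(\tfrac12\|p)=\tfrac12\log\frac{1}{4p(1-p)}$, which is the rate in your own Stirling bound. By Pinsker it exceeds $2(p-\tfrac12)^2$ for every $p\neq\tfrac12$, and substantially so away from threshold. At $p=0.9$ the two are about $0.51$ and $0.32$. So with $\LR=1.5$ your criterion picks verification, yet as $\delta\to0$ voting needs about $1.96\log\frac1\delta$ samples versus about $2.47\log\frac1\delta$ gates.

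The repair uses what you already derived: compare the gate upper bound with the voting \emph{lower} bound. This shows verification needs fewer units for small $\delta$ whenever $\log\LR>\DKL(\tfrac12\|p)$. That is all the statement's ``sufficiently discriminating'' requires, and $2(p-\tfrac12)^2$ is only its near-threshold approximation.

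For the same reason your ``up to absolute constants'' remark is too strong. As $p\to1$, $\DKL(\tfrac12\|p)\to\infty$ while $(p-\tfrac12)^2\to\tfrac14$. So the $(p-\tfrac12)^{-2}$ dependence holds only for $p$ bounded away from $1$, with constants depending on that margin. The paper's Table~\ref{tab:frontier} glosses over this caveat too.
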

\begin{proof}
Each accepted gate adds $\log\LR$ to the log-odds, and each batch of $n$
votes adds $\Theta((p-\tfrac12)^2 n)$ to a Chernoff exponent; both give
$\log(1/\delta)$ scaling, and error that decays at most geometrically per
unit of structure needs $\Omega(\log\frac1\delta)$ structure. The
allocation rule is the first-order optimality (KKT) condition of
Proposition~\ref{prop:sat}. In total calls, verification additionally
pays a regeneration factor of at most $\beta^{-k}$, which equals $1$ for
a complete verifier and stays small for $\beta$ near $1$.
\end{proof}

\paragraph{A worked organization.}
Suppose a base solver with $p_0=0.55$ and a verifier with $\LR=4$
($\beta=0.8,\alpha=0.2$). Each gate multiplies the odds by $4$; the
log-odds grow linearly while reliability saturates toward one
(Table~\ref{tab:calc}). Reaching $\delta=10^{-3}$ needs
$k=\lceil(\log 999+\log\frac{0.45}{0.55})/\log4\rceil=5$ gates, matching
Theorem~\ref{thm:amp}. Pure voting from $p_0=0.55$ would instead need
$\sim\!\frac{1}{2(0.05)^2}\log10^3\approx1400$ samples for the same
target: counting all calls (including regenerations), verification is
roughly two orders of magnitude cheaper here (a few dozen calls versus
about $1400$), which is exactly why effective organizations are built
around checkers.

\begin{table}[t]
\centering\small
\caption{Reliability calculus for the worked organization: $k$ gates,
$p_0{=}0.55$, $\LR{=}4$. Odds multiply; reliability saturates.}
\label{tab:calc}
\begin{tabular}{@{}cccc@{}}
\toprule
gates $k$ & odds $\odds_k$ & reliability $r_k$ & error $1-r_k$ \\
\midrule
0 & 1.22 & 0.550 & $4.5\times10^{-1}$ \\
1 & 4.89 & 0.830 & $1.7\times10^{-1}$ \\
2 & 19.6 & 0.951 & $4.9\times10^{-2}$ \\
3 & 78.2 & 0.987 & $1.3\times10^{-2}$ \\
4 & 313  & 0.997 & $3.2\times10^{-3}$ \\
5 & 1252 & 0.9992 & $8.0\times10^{-4}$ \\
\bottomrule
\end{tabular}
\end{table}

\section{Recursive Decomposition: A Reliability Master Theorem}\label{sec:master}
The recursion combinator $\mu$ raises the stakes: a problem is reduced to
subproblems, each solved by the \emph{same} repertoire, to a depth $d$.
Without correction, recursion is the serial law (Lemma~\ref{lem:seq})
compounded across an exponentially growing tree, and reliability
collapses. With verification at each node it is reliable at polylogarithmic
overhead. We make this precise.

Consider a balanced recursion of depth $d$ and branching $b$: an internal
node reduces its instance to $b$ subinstances, solves each recursively,
and recombines with reliability $\rho_c$ (the recombiner is correct, given
correct children, with probability $\rho_c$); a leaf is solved by a base
solver of reliability $p$.

\begin{proposition}[Unverified recursion collapses]\label{prop:collapse}
Without per-node verification, the reliability $r(d)$ of the recursive
solver obeys $r(d)=\rho_c\,r(d-1)^{b}$ with $r(0)=p$, whose closed form
is
\[
r(d)\;=\;\rho_c^{\,(b^d-1)/(b-1)}\;p^{\,b^d}.
\]
Hence for $b\ge2$, unless $p=\rho_c=1$, $r(d)\to0$ as $d\to\infty$, and
the decay is doubly exponential in the depth $d$.
\end{proposition}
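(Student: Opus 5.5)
The plan is to derive the recurrence from the serial law, solve it in log form, and then read off the limit from the exponents.

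\emph{Recurrence.} An internal node at height $d$ is correct exactly when all $b$ children are correct and the recombiner is correct given correct children. Under the standing independence convention of \S\ref{sec:laws}, the $b$ subtrees are disjoint and are solved by independent invocations, so their correctness events are independent, each with probability $r(d-1)$. The recombiner succeeds with probability $\rho_c$, independently of the children. By the independent case of Lemma~\ref{lem:seq}, applied to the $b+1$ ``stages'', we get $r(d)=\rho_c\,r(d-1)^b$. A leaf is a base solver, so $r(0)=p$. As in Lemma~\ref{lem:seq}, I will state the no-error-masking assumption explicitly: a wrong child is never repaired by the recombiner.

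\emph{Closed form.} I will take logarithms and set $L(d)=\log r(d)$, which turns the recurrence into the linear one $L(d)=\log\rho_c+b\,L(d-1)$. Unrolling it, or checking by induction on $d$, gives
\[
L(d)=b^d\log p+(1+b+\cdots+b^{d-1})\log\rho_c=b^d\log p+\tfrac{b^d-1}{b-1}\log\rho_c .
\]
For the inductive step, substitute into $\log\rho_c+b\,L(d-1)$ and use the identity $1+b\cdot\frac{b^{d-1}-1}{b-1}=\frac{b^d-1}{b-1}$. The edge cases $p=0$ or $\rho_c=0$ follow directly from the recurrence with the convention $\log 0=-\infty$.

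\emph{Collapse.} Suppose $b\ge2$ and $(p,\rho_c)\ne(1,1)$. If $p<1$, then $r(d)\le p^{b^d}$. If instead $p=1$ and $\rho_c<1$, then $r(d)=\rho_c^{(b^d-1)/(b-1)}\le\rho_c^{b^{d-1}}$. In either case $r(d)\le\theta^{c\,b^d}$ for some $\theta<1$ and constant $c>0$, so $r(d)\to0$. The decay is doubly exponential because $\log(1/r(d))=\Theta(b^d)$ grows exponentially in $d$.

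The only real obstacle is justifying that the children are independent. This requires that different subtrees use fresh invocations and that subinstances are fixed given the reduction. If the reducer $r$ is randomized, I would condition on its output first, so that the recurrence holds conditionally and then also after averaging, under the assumption that the recombiner's reliability $\rho_c$ is the same for every reduction.
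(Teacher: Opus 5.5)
Your proposal is correct and follows the paper's proof. You derive $r(d)=\rho_c\,r(d-1)^b$ from independence plus no error masking, unroll the recurrence (your log-linearization is the same unrolling), and read off the collapse from the $b^d$ growth of the exponents; the only cosmetic difference is that you bound the cases $p<1$ and $p=1,\ \rho_c<1$ separately, whereas the paper folds both into the single base $c=p\,\rho_c^{1/(b-1)}<1$, which also gives the exact rate $\log r(d)\sim b^d\log c$ when $p,\rho_c>0$.
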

\begin{proof}
A node is correct iff its recombiner is correct and all $b$ children are
correct; by independence $r(d)=\rho_c\,r(d-1)^{b}$. Unrolling the
recurrence gives
$r(d)=\rho_c^{\,1+b+\cdots+b^{d-1}}p^{\,b^d}
=\rho_c^{\,(b^d-1)/(b-1)}p^{\,b^d}$. Both exponents grow like $b^d$, so
$r(d)=c^{\,b^d(1+o(1))}$ with $c=p\,\rho_c^{1/(b-1)}<1$ unless
$p=\rho_c=1$.
\end{proof}

This is why naive ``decompose-and-recurse'' degrades with depth: each
level multiplies the number of opportunities for error by the branching
factor $b$. The remedy is to restore each node's reliability before its
result propagates upward.

\begin{theorem}[Recursion master theorem]\label{thm:master}
Let the tree have $N=\frac{b^{d+1}-1}{b-1}$ nodes, and suppose each node
is wrapped in a verification gate that drives its local error (given
correct inputs) to at most $\eta$. Then the whole computation is correct
with probability at least $1-N\eta$. Consequently, to achieve overall
error $\delta$ it suffices to verify each node to error
$\eta=\delta/N$, which by Theorem~\ref{thm:amp} requires per-node
verification depth
\[
k_{\mathrm{node}}=O\!\Big(\tfrac{1}{\log\LR}\big(\log\tfrac1\delta+d\log b\big)\Big),
\]
and total cost $O\!\big(N\,k_{\mathrm{node}}\big)=
O\!\big(b^{d}\,(\log\tfrac1\delta+d\log b)/\log\LR\big)$: linear in the
work $b^d$ up to a polylogarithmic overhead, provided $\LR>1$.
\end{theorem}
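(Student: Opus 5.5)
The argument has three parts: a union bound over nodes for the correctness claim, then an inversion of Theorem~\ref{thm:amp} at $\eta=\delta/N$ for the per-node depth, then a count of invocations for the total cost.

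\textbf{Correctness.} For each node $u$ of the tree, let $F_u$ be the event that the verified output of $u$ is wrong even though all of $u$'s inputs were correct. For a leaf the inputs are just the instance itself, so $F_u$ is simply a wrong leaf answer. The hypothesis says exactly that $\Pr[F_u]\le\eta$.

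I will show by induction on height that if no $F_u$ occurs, then every node outputs a correct answer. Leaves are correct because $F_{\text{leaf}}$ fails. At an internal node, all children are correct by induction, and $F_u$ fails, so $u$ is correct.

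Hence the root is wrong only on $\bigcup_u F_u$. The union bound from Lemma~\ref{lem:seq} gives $\Pr[\text{root wrong}]\le\sum_u\Pr[F_u]\le N\eta$. This step needs no independence across nodes, which is why I prefer it to the product form of Lemma~\ref{lem:seq}.

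Setting $\eta=\delta/N$ then gives overall error at most $\delta$.

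\textbf{Per-node depth.} I apply Theorem~\ref{thm:amp} at each node with target error $\eta$. The prior is the node's unverified local reliability $p_0$ given correct inputs: $p$ at a leaf, $\rho_c$ at an internal node. The theorem yields
\[
k_{\mathrm{node}}\le\frac{\log\frac{1-\eta}{\eta}+\log\frac{1-p_0}{p_0}}{\log\LR}.
\]
Since $N\le b^{d+1}$ for $b\ge2$, we have $\log\frac1\eta=\log\frac1\delta+\log N\le\log\frac1\delta+(d+1)\log b$. The term $\log\frac{1-p_0}{p_0}$ is a constant depending only on $p$ and $\rho_c$, which are fixed and bounded away from $0$. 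It can be absorbed into the $O(\cdot)$, giving the stated bound on $k_{\mathrm{node}}$.

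\textbf{Cost.} By Theorem~\ref{thm:amp}, a node with $k$ gates costs $O\big(k/(p_0\prod_i\beta_i)\big)$ expected invocations. Summing over the $N$ nodes by linearity of expectation, and using the cost homomorphism of \S\ref{sec:algebra}, gives $O(N k_{\mathrm{node}})$. Finally, $N=\Theta(b^d)$ for $b\ge2$.

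\textbf{Main obstacle.} The hard step is the regeneration factor $\prod_i\beta_i$, which may be as small as $\beta^{k_{\mathrm{node}}}$ and is not automatically a constant. I would handle it in one of two ways, as in the caption of Table~\ref{tab:frontier}:
\begin{itemize}
\item assume complete (or near-complete) verifiers, so the factor is $O(1)$; or
\item count the factor explicitly. Since $\beta^{-k_{\mathrm{node}}}=(N/\delta)^{O(\log(1/\beta)/\log\LR)}$, the factor is polynomial in $N/\delta$, and the bound remains "linear up to polylog" only when $\beta$ is near $1$.
\end{itemize}
I would state this caveat explicitly and otherwise treat the verifier's discrimination $\LR>1$ as fixed, so that the per-node overhead is $O(\log\frac1\delta+d\log b)=O(\log\frac{1}{\delta}+\log N)$. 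That overhead is polylogarithmic in the work $b^d$.
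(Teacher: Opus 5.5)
Your proposal is correct and follows the paper's proof essentially step for step. It uses the same union bound over per-node failure events; your induction on height just makes explicit the paper's remark that the output is correct unless some node errs, and it needs no independence across nodes. It then uses the same inversion of Theorem~\ref{thm:amp} at $\eta=\delta/N$ with $\log N=O(d\log b)$, and the same summation over the $N=\Theta(b^d)$ nodes.

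Your handling of the regeneration factor $\beta^{-k_{\mathrm{node}}}=(N/\delta)^{\ln(1/\beta)/\ln\LR}$ also matches the paper, which likewise counts $O(N\,k_{\mathrm{node}})$ in gate evaluations and notes that this factor equals $1$ only for a complete verifier. So, strictly, the ``linear up to polylog'' claim requires $\beta=1$ (or $(1-\beta)k_{\mathrm{node}}=O(1)$), not merely $\beta$ near $1$.
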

\begin{proof}
Let $A_v$ be the event that node $v$'s local step errs despite correct
inputs, $\Pr[A_v]\le\eta$. The output is correct unless some node errs, so
by the union bound the failure probability is at most $\sum_v\Pr[A_v]\le
N\eta$. Setting $N\eta=\delta$ gives $\eta=\delta/N$; achieving per-node
error $\delta/N$ from a fixed base by Theorem~\ref{thm:amp} costs
$O(\log(N/\delta)/\log\LR)$ gates, and $\log N=\Theta(d\log b)$. Summing
over $N$ nodes gives the total. The count is in gate evaluations; with an
incomplete verifier ($\beta<1$), regenerations multiply per-node cost by
at most $\beta^{-k_{\mathrm{node}}}=(N/\delta)^{\ln(1/\beta)/\ln\LR}$, a
mild polynomial factor that equals $1$ for a complete checker
($\beta=1$, e.g.\ a proof checker).
\end{proof}

\begin{corollary}[Threshold for recursion]\label{cor:rec-threshold}
Recursive decomposition is reliability-preserving to arbitrary depth iff
an informative verifier exists ($\LR>1$). At $\LR=1$ no per-node
correction is possible and reliability follows the collapsing recurrence
of Proposition~\ref{prop:collapse}.
\end{corollary}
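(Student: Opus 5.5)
The statement has two directions, and I would prove them separately, reusing Theorem~\ref{thm:master}, Theorem~\ref{thm:threshold} and Proposition~\ref{prop:collapse}. Throughout, ``reliability-preserving to arbitrary depth'' means: for every target $\delta>0$ and every depth $d$ there is a strategy in $\Alg(B)$, built on the depth-$d$ recursion, whose overall reliability is at least $1-\delta$ at finite cost.

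\emph{Sufficiency ($\LR>1\Rightarrow$ preserving).} Fix $\delta$ and $d$, and set $N=(b^{d+1}-1)/(b-1)$. Theorem~\ref{thm:amp} (equivalently part (1) of Theorem~\ref{thm:threshold}) lets us wrap each node in a cascade of $k_{\mathrm{node}}$ conditionally independent gates. This makes the node's local error, given correct children, at most $\eta=\delta/N$. Theorem~\ref{thm:master} then gives overall correctness at least $1-N\eta=1-\delta$. The cost is $O\big(b^d(\log\frac1\delta+d\log b)/\log\LR\big)$, which is finite for every $d$, so this direction is just an instantiation.

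\emph{Necessity ($\LR=1\Rightarrow$ collapse).} With $\beta=\alpha$, acceptance is independent of correctness. So conditioning on acceptance at any node leaves that node's reliability unchanged. This is the $\LR=1$ case of Lemma~\ref{lem:odds}, and part (1) of Theorem~\ref{thm:threshold} extends it to any cascade of such gates. I would then induct on depth. Suppose the children at level $d-1$ have reliability $r(d-1)$ after any number of uninformative gates. They are still independent, so the node before gating is correct with probability $\rho_c\,r(d-1)^b$. Gating leaves this unchanged, giving exactly the recurrence $r(d)=\rho_c\,r(d-1)^b$ of Proposition~\ref{prop:collapse}. Its closed form shows $r(d)\to0$ doubly exponentially unless $p=\rho_c=1$, so no $\delta<1-\lim r(d)$ is attainable for large $d$. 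For $\LR<1$ I would appeal to the verdict-swap remark in Theorem~\ref{thm:threshold}: the swapped verifier has discrimination above $1$. The ``iff'' should therefore be read with $\LR\neq1$ as informative and $\LR=1$ as the sole critical value.

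\emph{Main obstacle.} The step needing care is the claim in the necessity direction that no strategy at all helps. The inductive argument covers only gating with uninformative verifiers. Voting could still amplify, since Lemma~\ref{lem:vote} gives amplification when a node's reliability exceeds $\tfrac12$. To handle this, I would restrict the corollary's claim explicitly to per-node \emph{verification} strategies, matching ``no per-node correction is possible'': with only $\LR=1$ gates available, the posterior at each node equals its prior, and the recurrence is forced. I would add a remark that voting-based correction at nodes falls under the separate $p^\star=\tfrac12$ threshold of Theorem~\ref{thm:threshold}.
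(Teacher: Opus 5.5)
Your proposal is correct and follows the route the paper leaves implicit. Sufficiency is an instantiation of Theorem~\ref{thm:master}, and necessity feeds the $\LR=1$ case of Theorem~\ref{thm:threshold} (acceptance is independent of correctness, so gates leave each node's posterior unchanged) into the recurrence of Proposition~\ref{prop:collapse}. Your explicit restriction of the ``only if'' direction to verification-based correction (since per-node majority voting with $p>\tfrac12$ would otherwise still amplify), and your reading of ``informative'' as $\LR\neq1$ via the verdict swap, are sound clarifications that the paper's statement glosses over.
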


\paragraph{A worked recursion.}
Take depth $d{=}4$, branching $b{=}3$ (so $N=\frac{3^5-1}{2}=121$ nodes),
target $\delta{=}10^{-2}$, and a robust verifier with $\LR{=}4$. The
master theorem asks each node to reach error
$\eta=\delta/N\approx8.3\times10^{-5}$, i.e.\ odds
$\approx1.2\times10^{4}$, which from $p_0{=}0.55$ takes
$k_{\mathrm{node}}=\lceil(\log(1.2{\times}10^4)+\log\frac{0.45}{0.55})/\log4\rceil
=7$ gates. The unverified recurrence (Prop.~\ref{prop:collapse}) with the
same leaves and $\rho_c{=}0.9$ instead falls to $0.15$ after one level,
$0.003$ after two, and below $10^{-7}$ after three, a clear illustration
that depth without checking destroys reliability, while depth with
logarithmic-overhead checking preserves it.

The master theorem is the recursion-level counterpart of the threshold
dichotomy and of fault-tolerant computing: \emph{depth is affordable when
each level is checked}. The polylogarithmic-per-node overhead is precisely
the price of preventing error from compounding across the tree; it has
the same shape as the overhead in fault-tolerant circuits, but is derived
here from the odds law rather than from code distance.

\section{Selective Reliability: Abstention and Calibration}\label{sec:abstain}
So far reliability was measured on \emph{committed} answers
(Definition~\ref{def:rel}). The freedom to abstain ($\bot$) is itself a
reliability lever: a system can decline the instances it is least sure of
and raise its reliability on the rest. This is selective
prediction~\cite{chow1970,elyaniv2010}, and it interacts cleanly with our
laws because the verifier already supplies a confidence signal: its
log-likelihood-ratio score.

\begin{definition}[Risk--coverage]
For a score $\phi(x,a)$ (e.g.\ the accumulated log-odds after gating) and
threshold $\tau$, the selective solver commits iff $\phi\ge\tau$. Its
coverage is $\mathrm{cov}(\tau)=\Pr[\phi\ge\tau]$ and its risk is
$R(\tau)=\Pr[a\notin Y^\star\mid \phi\ge\tau]$. The \emph{risk--coverage
curve} is $\tau\mapsto(\mathrm{cov}(\tau),R(\tau))$.
\end{definition}

\begin{proposition}[Abstention monotonicity and optimal threshold]\label{prop:chow}
If $\phi$ is the true log-odds of correctness (a calibrated score), then
$R(\tau)$ is non-increasing in $\tau$: tightening the acceptance threshold
never worsens reliability on the covered region. Moreover, under costs
$c_{\mathrm{err}}$ for a wrong commit and $c_{\mathrm{abs}}$ for
abstaining, the risk-minimizing policy is Chow's rule: commit iff the
posterior correctness exceeds $1-c_{\mathrm{abs}}/c_{\mathrm{err}}$,
equivalently $\phi\ge\log\frac{c_{\mathrm{err}}-c_{\mathrm{abs}}}{c_{\mathrm{abs}}}$.
\end{proposition}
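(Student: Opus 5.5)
The plan is to reduce both claims to one object: the calibrated posterior $\pi(x,a)\eqdef\Pr[a\in Y^\star(x)\mid\phi(x,a)]$. Since $\phi$ is the true log-odds, $\pi=e^{\phi}/(1+e^{\phi})$, which is a strictly increasing function of $\phi$. The risk on the covered region is then an average of $1-\pi$ over the upper tail $\{\phi\ge\tau\}$. Concretely, I would write
\[
R(\tau)=\E\big[1-\pi\mid \phi\ge\tau\big],
\]
which follows from the tower property conditioned on $\phi$. This identity is what calibration buys us. Without it, $R$ need not be monotone, since a miscalibrated score can rank wrong answers above right ones.

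For monotonicity, I will take $\tau<\tau'$ with $\mathrm{cov}(\tau')>0$ and split the event $\{\phi\ge\tau\}$ into the upper tail $U=\{\phi\ge\tau'\}$ and the slice $L=\{\tau\le\phi<\tau'\}$. This gives
\[
R(\tau)=\frac{\Pr[U]\,R(\tau')+\Pr[L]\,\E[1-\pi\mid L]}{\Pr[U]+\Pr[L]}.
\]
On $L$ we have $1-\pi>1-\sigma(\tau')$, and on $U$ we have $1-\pi\le 1-\sigma(\tau')$, where $\sigma$ is the logistic function. Hence $\E[1-\pi\mid L]\ge 1-\sigma(\tau')\ge R(\tau')$. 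So $R(\tau)$ is a convex combination of $R(\tau')$ and something at least $R(\tau')$, which yields $R(\tau)\ge R(\tau')$. The degenerate cases ($\Pr[L]=0$, or zero coverage) I will treat using the paper's convention for empty commitment sets.

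For Chow's rule, I will minimize the expected loss pointwise, conditioning on the observed score. Committing costs $c_{\mathrm{err}}(1-\pi)$ in expectation, and abstaining costs $c_{\mathrm{abs}}$. Any policy, including randomized ones, is a measurable choice given $\phi$, so its expected loss is $\E[\cdot]$ of a pointwise mixture of these two quantities. The loss is therefore minimized by committing exactly when $c_{\mathrm{err}}(1-\pi)\le c_{\mathrm{abs}}$, i.e.\ when $\pi\ge 1-c_{\mathrm{abs}}/c_{\mathrm{err}}$. Converting this to odds, the condition reads $\pi/(1-\pi)\ge (c_{\mathrm{err}}-c_{\mathrm{abs}})/c_{\mathrm{abs}}$, and taking logs gives the stated threshold on $\phi$.

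This requires assuming $0<c_{\mathrm{abs}}<c_{\mathrm{err}}$, and I will note the boundary cases explicitly. If $c_{\mathrm{abs}}\ge c_{\mathrm{err}}$ the rule is to always commit, and if $c_{\mathrm{abs}}=0$ it is to never commit. Ties at equality are immaterial to the risk. The main obstacle is the monotonicity step, because it depends entirely on the identity for $R(\tau)$ above; I will state clearly that this is where calibration enters. The Chow part is routine once we condition on $\phi$.
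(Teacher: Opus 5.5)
Your proposal is correct and follows the paper's own argument. Calibration gives $\Pr[\text{wrong}\mid\phi]=\sigma(-\phi)$, so $R(\tau)$ is the average of a decreasing function over the upper tail $\{\phi\ge\tau\}$; your slice/tail convex-combination split makes the paper's one-line ``average of decreasing tails'' explicit. Chow's rule then comes from the same pointwise comparison of $c_{\mathrm{err}}\Pr[\text{wrong}\mid\phi]$ with $c_{\mathrm{abs}}$. Your explicit assumption $0<c_{\mathrm{abs}}<c_{\mathrm{err}}$ is a real improvement, since the paper's log threshold needs it.

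One caution for the degenerate case: at zero-coverage thresholds, do not import the zero-reliability convention of Definition~\ref{def:rel}. It would set $R=1$ once $\tau$ exceeds the essential supremum of $\phi$ and so break monotonicity. Instead, restrict the claim to $\mathrm{cov}(\tau)>0$ or set $R=0$ there.
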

\begin{proof}
For calibrated $\phi$, $\Pr[\text{wrong}\mid\phi]=\sigma(-\phi)$ is
decreasing in $\phi$, so the conditional risk over the region
$\{\phi\ge\tau\}$ is an average of decreasing tails and is non-increasing
in $\tau$. The Bayes-optimal decision compares expected costs
$c_{\mathrm{err}}\,\Pr[\text{wrong}\mid\phi]$ against $c_{\mathrm{abs}}$;
commit when the former is smaller, which rearranges to the stated
log-odds threshold (Chow's rule~\cite{chow1970}).
\end{proof}

Thus the verifier does double duty: as a \emph{gate} it amplifies
reliability (\S\ref{sec:amp}); as a \emph{score} it orders instances for
selective abstention (here). The risk--coverage curve is the achievable
frontier of an organization, and escalation (abstaining and routing hard
instances to a stronger sub-organization) is just abstention with a
fallback, inheriting these guarantees.

\section{Generator--Verifier Games}\label{sec:games}
The amplification theorem assumed the generator's errors are
\emph{indifferent} to the verifier. When a generator instead
\emph{optimizes} for acceptance, as learned systems do under a reward and
as nature does under selection, the verifier's discrimination can
silently collapse. This is Goodhart's law (a proxy measure stops
tracking what it was meant to measure once it is optimized directly)
expressed in our calculus, and it limits how much reliability
organization can manufacture.

Model a Stackelberg interaction: the verifier $v$ is fixed (the leader);
the generator chooses an output distribution $g$ (the follower) to
maximize acceptance probability, possibly concentrating mass on
\emph{wrong} answers that $v$ nonetheless accepts. Let
$\alpha^\dagger(v)=\sup_{g:\,g\text{ wrong}}\Pr[v\ \textsf{acc}\mid g]$ be
the \emph{adversarial} false-acceptance rate.

\begin{definition}[Non-gameable verifier]
A verifier is $\Lambda$-\emph{robust} if its discrimination stays above
$\Lambda>1$ against the worst-case generator, i.e.\
$\beta/\alpha^\dagger(v)\ge\Lambda$.
\end{definition}

\begin{theorem}[Robust amplification]\label{thm:robust}
If every gate is $\Lambda$-robust with $\Lambda>1$, the amplification and
master theorems (Thms.~\ref{thm:amp},~\ref{thm:master}) hold verbatim
against adversarial generation, with $\LR$ replaced by the robust
$\Lambda$. If some gate has $\alpha^\dagger=\beta$ (fully gameable), its
effective $\LR\to1$ and it contributes no amplification regardless of its
nominal $\beta/\alpha$ on benign inputs.
\end{theorem}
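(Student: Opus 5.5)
The plan is to re-run the proofs of Theorem~\ref{thm:amp} and Theorem~\ref{thm:master}, replacing every use of the nominal false-acceptance rate $\alpha$ by an upper bound that holds against any generator. Fix a gate with completeness $\beta$. Whatever distribution $g$ the generator chooses, its probability of acceptance conditional on producing a wrong answer is a mixture, over wrong outputs, of acceptance probabilities. By the definition of $\alpha^\dagger(v)$ as a supremum over wrong-answer generators, this mixture is at most $\alpha^\dagger(v)$. Completeness $\beta$ is a property of $v$ on correct answers, so it is unaffected by how mass is placed among correct answers. I will state this as an assumption: $\beta$ is a lower bound on acceptance of any correct answer. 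Then the Bayes computation of Lemma~\ref{lem:odds} gives the inequality
\[
\odds_{\mathrm{post}}\;\ge\;\odds_{\mathrm{pre}}\cdot\frac{\beta}{\alpha^\dagger(v)}\;\ge\;\odds_{\mathrm{pre}}\cdot\Lambda
\]
in place of the equality. Monotonicity of $r=\odds/(1+\odds)$ converts this into a reliability lower bound.

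Next I stack gates. I need the cascade analogue: conditional on correctness status, the $k$ verdicts factor. I will assume the robust bound holds jointly, so that the probability all gates accept a wrong answer is at most $\prod_i\alpha_i^\dagger$. This is the natural conditional-independence hypothesis of Theorem~\ref{thm:amp}, imposed against the worst-case generator. The odds after $k$ gates are then at least $\odds_0\Lambda^k$, and the depth bound of Theorem~\ref{thm:amp} follows with $\log\LR$ replaced by $\log\Lambda$. The cost bound is unchanged, because joint acceptance still has probability at least $p_0\prod_i\beta_i$. For Theorem~\ref{thm:master}, the only input was a per-node error of at most $\eta$, obtained from Theorem~\ref{thm:amp}. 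So the union-bound argument goes through verbatim, with $k_{\mathrm{node}}=O\big((\log\tfrac1\delta+d\log b)/\log\Lambda\big)$.

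For the second claim, suppose $\alpha^\dagger=\beta$. Take a sequence of wrong-answer generators $g_m$ with acceptance tending to $\beta$. The effective ratio $\beta/\Pr[\textsf{acc}\mid g_m]$ tends to $1$, so the odds multiplier of Lemma~\ref{lem:odds} tends to $1$. By the $\LR=1$ case of Theorem~\ref{thm:threshold}, the gate then leaves the posterior at its prior in the limit, whatever its benign ratio $\beta/\alpha$.

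The main obstacle is the joint step. A single adversarial generator must respect the per-gate suprema simultaneously, and the worst case for the cascade is not the product of per-gate worst cases unless independence is assumed. I would first try to make this an explicit hypothesis of $\Lambda$-robustness for cascades. Failing that, I would bound each gate sequentially, conditioning on acceptance by earlier gates, and define $\alpha_i^\dagger$ relative to that conditioned generator. Each factor is still at most $1/\Lambda$, so the product bound survives.
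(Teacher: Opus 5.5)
Your proposal is correct and follows essentially the paper's argument. You substitute the worst-case false-acceptance $\alpha^\dagger$ for $\alpha$ in the odds law (Lemma~\ref{lem:odds}), so each gate multiplies the odds by at least $\beta/\alpha^\dagger\ge\Lambda$; you rerun Theorems~\ref{thm:amp} and~\ref{thm:master} with $\log\Lambda$ in place of $\log\LR$; and you invoke the $\LR=1$ case of Theorem~\ref{thm:threshold} for a fully gameable gate.

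You are more explicit than the paper's one-line proof about hypotheses it leaves implicit: that $\beta$ lower-bounds acceptance of every correct answer, and that the worst-case bounds factor across the cascade. Your sequential-conditioning fallback for the latter is sound only if $\alpha_i^\dagger$ is read as a pointwise bound over wrong $(x,a)$ with fresh verifier randomness, because conditioning on earlier acceptances shifts the instance distribution away from $\mathcal{D}$.
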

\begin{proof}
The odds law (Lemma~\ref{lem:odds}) used only $\Pr[\textsf{acc}\mid
\text{corr}]$ and $\Pr[\textsf{acc}\mid\text{wrong}]$; substituting the
worst-case wrong-acceptance $\alpha^\dagger$ for $\alpha$ yields posterior
odds multiplied by $\beta/\alpha^\dagger\ge\Lambda$, and the proofs of
Theorems~\ref{thm:amp} and~\ref{thm:master} go through with $\Lambda$. If
$\alpha^\dagger=\beta$ the multiplier is $1$, so by the $\LR{=}1$ case of
Theorem~\ref{thm:threshold} no amplification occurs.
\end{proof}

The lesson sharpens \S\ref{sec:lower}: it is not enough for a verifier to
be discriminating on average; to support reliable organization under
optimization pressure it must be discriminating against an adversary that
\emph{searches} for what it will wrongly accept. Proof checkers and type
systems are robust in this sense (their false-acceptance is bounded by
soundness, independent of how the prover was chosen); learned reward
models and shallow heuristics often are not, which is precisely why
reliability built on them erodes as generators are optimized against
them. Robust verification is the scarce resource on which reliable
intelligence is built.

\section{Related Work}\label{sec:related}
\paragraph{Reliable computation from unreliable parts.}
Von Neumann's multiplexing and the subsequent fault-tolerance threshold
theorems~\cite{vonneumann1956,aharonov1997} established that redundancy
plus error-correction yields arbitrarily reliable computation above a
component-quality threshold. Our Theorem~\ref{thm:threshold} is the
analogue for problem-solving, with the verifier's likelihood ratio in the
role of component quality and verification in the role of correction.

\paragraph{Aggregation and ensembles.}
The Condorcet jury theorem~\cite{condorcet1785}, boosting and ensemble
learning~\cite{schapire1990}, and concentration of
measure~\cite{hoeffding1963} explain when many weak deciders make a strong
one; our vote law and diversity floor (Theorems~\ref{lem:vote},
\ref{thm:diversity}) are the reliability-calculus form, emphasizing that
correlation caps the gain.

\paragraph{Verification and learning theory.}
The asymmetry that checking can be easier than producing underlies
complexity theory's verifier-centric definitions and PAC
learning~\cite{valiant1984}; Lemma~\ref{lem:odds} quantifies the value of
a checker as a likelihood ratio and connects to information-theoretic
limits~\cite{cover2006}. Selective prediction and the reject
option~\cite{chow1970,elyaniv2010} furnish the abstention frontier of
\S\ref{sec:abstain}, and our generator--verifier analysis (\S\ref{sec:games})
formalizes the Goodhart effect that erodes non-robust checkers under
optimization pressure.

\paragraph{Compositional and categorical models.}
Markov categories give a clean syntax for stochastic
composition~\cite{fritz2020,chojacobs2019}; lattice and fixed-point
methods underlie program semantics~\cite{tarski1955,daveypriestley2002}.
We use both to make ``organization'' a formal object and to prove that
self-organization is a fixed point.

\paragraph{Problem-solving architectures.}
From the General Problem Solver and means--ends analysis~\cite{newell1972}
to the society-of-mind view that intelligence is the organization of
many small unintelligent processes~\cite{minsky1986}, and the
divide-and-conquer paradigm in algorithms and distributed
systems~\cite{dean2004}, the recurring idea is that capability is a
property of structure. We supply a reliability-theoretic account of which
structures work and why.

\section{Discussion and Conclusion}\label{sec:concl}
We modeled problem-solving as an algebra over unreliable solvers and
found that a small generating set (sequence, vote, verify, recurse)
suffices to express the organizations intelligence actually
uses, and that reliability flows through these combinators by explicit
laws. Sequencing degrades; voting amplifies above chance; verification
multiplies the odds by the checker's likelihood ratio and so, stacked,
closes the error gap at logarithmic depth. Above critical parameters
($\LR^\star=1$, $p^\star=\tfrac12$) reliability is essentially free to
buy; at or below them it cannot be bought at all. Crucially, the layered,
verifier-saturated organizations we observe in capable systems are not
accidental: they are the fixed points of local improvement on the
strategy lattice, and they coincide with the constrained optima that
equalize marginal information per cost. And reliability has hard limits:
the information the verifier carries, the diversity of the solvers, and
the absence of a universal free lunch.

Three implications stand out. (i) \emph{Build checkers, not just
generators}: when a discriminating verifier exists, verification beats
voting by orders of magnitude (\S\ref{sec:frontier}), so the highest-value
engineering is often in the test, not the attempt. (ii) \emph{Cultivate
diversity}: cloning a single solver gains nothing beyond its shared-error
floor (Theorem~\ref{thm:diversity}). (iii) \emph{Expect emergent
layering}: any system that locally trades cost for reliability will grow
verification layers on its own (Theorem~\ref{thm:fixed}), so the design
question is less whether to layer than where the marginal information is.

\paragraph{Why organizations look the way they do.}
The same laws explain recurring features of human and biological
problem-solving. Peer review, replication, and reproducibility are
verification gates; the scientific premium on \emph{independent}
confirmation is Corollary~\ref{cor:neff} in cultural form. Bureaucratic
sign-off chains are gate cascades trading latency for reliability.
Modular redundancy in engineering (running three units and taking the
majority) is voting above threshold. Division of labor is sequential
decomposition made safe by local checking (inspection, unit tests).
That such different systems converge on layered, checker-saturated
structure is, on this account, not convergent accident but the shared
fixed point of local improvement under the reliability laws
(Theorem~\ref{thm:fixed}).

\paragraph{Limitations and future work.}
Our assumptions of independence and conditional independence are
idealized;
\S\ref{sec:lower} bounds the cost of violating them but a tight theory of
\emph{partially} correlated verifier cascades remains open. We treated
verifier quality as given; \emph{learning} the decomposition and the verifier, together with
the game-theoretic dynamics when generators adapt to checkers, is the
natural sequel, as is a quantitative bridge from these
laws to the behavior of real model harnesses, which the companion paper
takes up empirically.

\end{document}